\documentclass[11pt,onecolumn]{IEEEtran}
\usepackage{amsmath,amssymb,amsthm}
\usepackage[hidelinks]{hyperref}

\usepackage{tikz}
\usetikzlibrary{arrows.meta, positioning, fit, decorations.pathreplacing}

\theoremstyle{definition}
\newtheorem{definition}{Definition}
\newtheorem{assumption}{Assumption}
\theoremstyle{plain}
\newtheorem{lemma}{Lemma}
\newtheorem{proposition}{Proposition}
\newtheorem{theorem}{Theorem}

\title{Homomorphic Encryption with Coset Code Construction}
\author{\IEEEauthorblockN{Haodong Yang}
\IEEEauthorblockA{\textit{EECS}  \textit{Syracuse University}\\
Syracuse, U.S. \\
hyang85@syr.edu}
\and
\IEEEauthorblockN{Venkata Gandikota}
\IEEEauthorblockA{\textit{EECS}  \textit{Syracuse University}\\
Syracuse, U.S. \\
vsgandik@syr.edu}
}

\begin{document}

\maketitle

\begin{abstract}
    Homomorphic encryption enables computation on input data without decrypting it. This paper extends a construction based on scalar message codes, utilizing coset codes to pack $k$ message symbols into a ciphertext vector of length $n$. For a given field and vector length, this packing method reduces the number of ciphertext vectors required to transmit a message record. A compatible family of evaluation codes supports public addition and multiplication operations up to a predetermined degree. We prove that the scheme achieves bounded chosen-plaintext security under the coordinate-label distinguishability assumption.
\end{abstract}

\section{Introduction}

Homomorphic encryption allows a computing party to perform supported operations on encrypted data without possessing the decryption key~\cite{FHEGuide2015}. A data owner can encrypt a set of inputs, send the ciphertexts to a server, and later decrypt the result returned by that server. The goal of computing on encrypted data was articulated by Rivest, Adleman, and Dertouzos through privacy homomorphisms~\cite{RAD1978}. Gentry's construction subsequently established fully homomorphic encryption, supporting general computations on encrypted inputs~\cite{GentryThesis2009}. This capability lets the owner use external computing resources while keeping the inputs and intermediate values confidential from the computing party.

Even a restricted set of homomorphic operations can be useful. Additively homomorphic encryption, such as Paillier's scheme, allows an encrypted sum to be formed from encryptions of individual values~\cite{Paillier1999}. Castelluccia, Mykletun, and Tsudik apply additive homomorphism to the aggregation of encrypted sensor measurements~\cite{CMT2005}. Supporting both addition and multiplication extends the range of possible computations beyond aggregation, including the evaluation of arithmetic expressions~\cite{GentryThesis2009} and encrypted database lookup~\cite{SmartVercauteren2014}. These applications motivate studying how much information a ciphertext can carry and which operations remain possible on that representation.

Coding theory also provides foundations for encryption. McEliece introduced a public-key cryptosystem based on algebraic codes, in which secret code structure enables efficient correction of the errors introduced during encryption~\cite{McEliece1978}. Homomorphic encryption from codes additionally requires operations on ciphertexts to preserve the intended message arithmetic~\cite{Grimaldi2025}. Armknecht et al.~\cite{AAPS11} pursue this goal using noisy evaluation codewords that share a secret set of error-free coordinates. Bogdanov and Lee~\cite{Andrej11} propose another construction using noisy codeword encodings. These works connect encrypted computation with the algebraic properties and recovery procedures of codes.

From an algebraic perspective, homomorphic correctness requires that a supported operation on encrypted representations yield the corresponding operation on the messages after decryption~\cite{RAD1978,FHEGuide2015}. Linear combinations of codewords remain in a linear code, whereas coordinatewise products can lie in a larger space; products and powers of linear codes formalize this behavior~\cite{Randriambololona2015}. In evaluation-based polynomial codes, multiplication of two codewords raises the degree of the encoding polynomials and therefore changes the recovery requirements~\cite{AAPS11}. In lattice-based constructions, arithmetic operations instead increase ciphertext noise, requiring mechanisms to control its growth~\cite{GentryThesis2009,BGV2014}. For example, Brakerski, Gentry, and Vaikuntanathan give leveled constructions whose parameters accommodate a prescribed circuit depth~\cite{BGV2014}. These examples show why the representation and the supported computation range must be designed together.

The security of the algebraic structure has also been examined from the attacker's perspective. Gauthier, Otmani, and Tillich exploit dimensions of coordinatewise product codes to attack the construction of Bogdanov and Lee~\cite{GOT12}. The bounded-use analysis of Armknecht et al.~\cite{AAPS11} separately highlights information exposed by repeated encryptions under one key. These results motivate examining both fresh ciphertexts and the algebraic relations visible through their public evaluation.

The number of message symbols carried by one ciphertext is another important design choice. Packing is already used in homomorphic encryption: Smart and Vercauteren develop single-instruction, multiple-data operations that process several encrypted data elements in parallel~\cite{SmartVercauteren2014}. Gentry, Halevi, and Smart use packed plaintext vectors and routing operations to amortize the cost of homomorphic circuit evaluation~\cite{GHS2012}. In the coding-theoretic setting considered here, we study packing through message cosets, drawing on the coset-coding framework~\cite{coset-code} and the homomorphic error-correcting codes of Yang, Gandikota, and Jones~\cite{YGJHomECC}. Our focus is the extension of the scalar construction of Armknecht et al.~\cite{AAPS11} to vector messages, together with the recovery requirements and security restrictions of that extension.

In the model considered here, an evaluator adds and multiplies ciphertext vectors coordinatewise without the secret key. The receiver decrypts the resulting ciphertext to obtain the corresponding operation on the message vectors. Multiplication can move the clean component into a larger code space, so the construction uses a compatible sequence of codes and a bound on the formal degree of supported computations. Public evaluation produces derived ciphertexts without fresh encryption randomness. Our security analysis considers an attacker who sees a bounded number of fresh encryptions under one key and may perform these public computations. Following the indistinguishability approach to encryption security~\cite[Sec.~5.1]{GM1984}, we ask whether the observer can distinguish encryptions of different messages despite the algebraic relations exposed by evaluation.

\subsection{Our Contributions}
\label{subsec:contributions}

The construction proposed by Armknecht et al.~\cite{AAPS11} encrypts a single message symbol (represented by a field element) within each ciphertext. 
% the authors leave the extension to longer messages as a topic for future research. 
Furthermore, they note a limit on the number of encryption operations that can be performed using the same key. Therefore, if the number of samples is sufficiently large, information leakage may occur, thereby compromising confidentiality.

Drawing upon the homomorphic error-correcting code scheme proposed by Yang, Gandikota, and Jones~\cite{YGJHomECC}, we incorporate a coset code~\cite{coset-code} construction into this encryption framework, thereby extending the method of message representation. Its primary advantage is that each ciphertext can carry multiple message symbols simultaneously; this allows longer messages to be encrypted as a single unit and fully recovered by the recipient. 

For example, consider a record containing three sensor readings, each represented by one symbol over the same finite field. The earlier scalar construction requires three separate encryptions, producing one ciphertext for each symbol. Our construction can encrypt all three symbols together in one ciphertext, provided the chosen coset code supports three message symbols. Thus, at a common field and ciphertext-vector length, the packed representation requires storing and transmitting one evaluation vector instead of three. This approach amortizes the codeword storage overhead across multiple message symbols. A comparison at matched security levels additionally requires code parameters that provide comparable security.

We establish conditions for recovering the entire packed message from the good positions and for preserving message addition and multiplication under public computation. The receiver solves one code-space recovery problem for all message symbols, although its cost still depends on the number of symbols recovered. Both addition and multiplication operate directly on ciphertext vectors. A compatible sequence of code spaces accounts for the growth caused by multiplication, while retaining the same message space and ciphertext-vector length.

Packing reduces the number of fresh ciphertexts required to transmit a fixed number of message symbols, but this representation benefit does not by itself establish stronger security. We reduce bounded chosen-plaintext security to a coordinate-label distinguishing assumption for the exact code and error distributions. We also give a rank test that uses products of zero encryptions to rule out that assumption for some parameter choices. The sample budget, message dimension, and supported computation degree must therefore be assessed together.

\subsection{Organization of the Paper}
\label{subsec:organization}

We organize the paper so that readers can first understand the construction and its assumptions, then see the results and comparisons together, and finally examine their proofs. The starting point is the extension of the single-symbol construction of Armknecht et al.~\cite{AAPS11} to packed vector messages. The definitions and algorithms specify what is being constructed, while the security model separates correctness from confidentiality. We then collect the main guarantees, their limitations, and the concrete code-family parameters in one results section. A separate proofs section supplies the reductions and explicit constructions underlying those claims.

Section~\ref{sec:related-work} reviews earlier work, and Section~\ref{sec:scheme-overview} explains the main idea. Section~\ref{sec:preliminaries} introduces the notation and algebraic requirements, followed by the algorithms in Section~\ref{sec:encryption}. Section~\ref{sec:security-model} defines the security experiments and hardness assumption. Section~\ref{sec:main-results} presents the correctness and security results, the Reed--Muller and Reed--Solomon existence theorems, and the packing and storage comparisons. Section~\ref{sec:proofs} proves these results and gives the detailed constructions and examples. Section~\ref{sec:conclusion} concludes the paper.

\section{Related Work}
\label{sec:related-work}

\subsection{Homomorphic Encryption from Codes}
\label{subsec:related-code-encryption}

Grimaldi~\cite{Grimaldi2025} surveys homomorphic encryption schemes based on coding theory and polynomials. We compare the constructions most closely related to our message representation and public arithmetic.

Homomorphic encryption from codes uses the algebraic structure of codewords to support computations on encrypted messages. Armknecht et al.~\cite{AAPS11} construct a symmetric scheme from evaluation codes. To encrypt a field element $m$, the sender randomly chooses an encoding polynomial $f$ satisfying $f(y)=m$ at a designated message point $y$. The ciphertext contains noisy versions of the evaluations $f(x_1),\ldots,f(x_n)$ at other points. A hidden set of error-free coordinates enables the receiver to recover the message value. Coordinate-wise addition and multiplication implement the corresponding message operations, with multiplication restricted by a prescribed polynomial-degree bound. Their construction encrypts one field element per ciphertext and limits the number of fresh encryptions under one key. Their framework applies to several suitable evaluation-code families, with Reed--Muller codes serving as a concrete instantiation~\cite{AAPS11}.

The choice of a single message symbol is deliberate. In their discussion of special evaluation codes, the authors explain that it simplifies parts of the security analysis and leaves many possible encoding polynomials for the same message, allowing encryption to be randomized~\cite[Sec.~3]{AAPS11}. They explicitly note that larger plaintext spaces are possible and leave that extension for future work. This motivates our comparison with a vector-message coset representation, in which the code space is divided between carrying message symbols and providing random representatives. The amount of randomization retained must be considered when choosing how many symbols to pack together.

Bogdanov and Lee~\cite{Andrej11} proposed a unique approach based on affine code encoding utilizing independent random noise. Their design links arithmetic operations on messages to operations on ciphertext vectors and develops techniques to control decryption errors during homomorphic evaluation. Subsequent cryptanalysis has highlighted the significance of this specific encoding structure. Gauthier, Otmani, and Tillich~\cite{GOT12} proposed an attack against the modified Reed-Solomon codes used in this construction, exploiting dimensional properties of the space generated by coordinate-wise products. More recently, Lage and Kurmann~\cite{LageKurmann2026} reported attacks against Reed-Muller constructions—including specific instances proposed by Armknecht et al.—that enable partial key recovery and, in many cases, message recovery as well.

% These works motivate analyzing the exact encoding and error distribution used by a scheme. 
Challa and Gunta~\cite{ChallaGunta2021Modified} randomize Reed--Muller message encoding by prepending a random string. For a linear encoder, this can be viewed as coset sampling: writing the encoded word as $\boldsymbol rG_R+\boldsymbol mG_M$, fixing $\boldsymbol m$ and varying $\boldsymbol r$ produces the coset $\boldsymbol mG_M+\operatorname{rowspan}(G_R)$. Here row-vector notation separates the random and message coordinates of their encoder. Their expanded-code construction~\cite{ChallaGunta2021Expanded} instead represents a $k$-symbol message by a $k\times n$ ciphertext matrix with secret error positions and a permutation, supporting coordinatewise addition and multiplication. Thus vector messages and randomized codeword representations already occur in earlier code-based proposals.

Polynomial evaluation also underlies the schemes of Dowerah and Krishnaswamy. Their scalar construction samples masking polynomials from a secret ideal and adds discrete Gaussian noise~\cite{DowerahKrishnaswamy2019}. Their later construction encrypts multiple plaintext bits in one ciphertext and supports multiplication through a public tensor evaluation key~\cite{DowerahKrishnaswamy2022}. These schemes recover messages using linear transformations and rounding subject to noise bounds. Under the compatibility conditions below, our construction uses exact recovery of a message coset from secret good coordinates, with errors confined to their complement and public multiplication performed directly on ciphertext coordinates.

Aguilar-Melchor, Dyseryn, and Gaborit~\cite{ADG2025} construct a secret-key scheme from random rank-metric ideal codes, supporting additions, plaintext multiplications, and a prescribed number of ciphertext multiplications. Their security analysis restricts the number of independent ciphertexts published under one key. Their rank-metric construction differs from the fixed secret error locations used here, but it demonstrates that bounded homomorphic computation and bounded ciphertext exposure also arise in other code-based frameworks.

Our construction adapts the synchronized-error framework of~\cite{AAPS11} to vector messages represented by cosets. Its public multiplication procedure uses compatible code spaces, and its confidentiality analysis concerns a bounded number of fresh ciphertexts under an explicitly stated assumption. The framework is formulated for nested linear codes and consistent vector-message maps, independently of the RM and RS examples. Correctness applies to any efficiently sampled family satisfying Definition~\ref{def:compatible-family}; security additionally requires Assumption~\ref{ass:security-iid-dscp} for the exact setup distribution and exposed information. This scope does not include arbitrary linear codes without the required multiplication and recovery properties.

\subsection{Coset Codes and Homomorphic Error-Correcting Codes}
\label{subsec:related-coset-homecc}

Coset codes organize codewords into classes whose members represent the same information. Forney~\cite{coset-code} develops the classical framework for coset coding and its geometric structure. For the present work, we use the following finite-field quotient formulation, also used in~\cite{YGJHomECC}.

\begin{definition}[Cosets and coset codes]
\label{def:related-coset-code}
Let $\mathbb F_q$ be a finite field, and let
$K\subseteq\mathcal C\subseteq\mathbb F_q^n$ be linear codes.
For $\boldsymbol c\in\mathcal C$, its coset modulo $K$ is
\[
    \boldsymbol c+K
       =\{\boldsymbol c+\boldsymbol z:\boldsymbol z\in K\}.
\]
The coset code is the quotient space
\[
    \mathcal C/K
       =\{\boldsymbol c+K:\boldsymbol c\in\mathcal C\}.
\]
Two representatives specify the same coset exactly when their difference belongs to $K$.
\end{definition}

Equivalently, we could choose a complementary subspace $\mathcal C_1\subseteq\mathcal C_2$ such that $\mathcal C_2=\mathcal C_1\oplus K$. Throughout this paper, $\oplus$ denotes an \emph{internal direct sum} of $\mathbb F_q$-vector spaces. Here both summands are subspaces of the same space $\mathcal C_2\subseteq\mathbb F_q^n$, and the direct-sum condition means
\[
    \mathcal C_2=\mathcal C_1+K,
    \qquad \mathcal C_1\cap K=\{\boldsymbol 0\}.
\]
Every coset then has a unique representative in $\mathcal C_1$: each codeword decomposes uniquely into a component in $\mathcal C_1$ and a component in $K$. With a fixed ordered basis of $\mathcal C_1$, this representative specifies $k$ message symbols, where $k=\dim\mathcal C_1=\dim\mathcal C_2-\dim K$. Adding an element of $K$ changes the representative while preserving the represented message.

Yang, Gandikota, and Jones~\cite{YGJHomECC} study homomorphic error-correcting codes, which combine recovery from errors with compatibility between encoded operations and message arithmetic. For a supported operation, computing on encoded inputs must produce an encoding of the result of that operation on the messages. Their manuscript develops coset constructions using Reed--Muller, Reed--Solomon, and Hermitian codes to support message addition and coordinate-wise multiplication while retaining error-correction
properties.

Their use of canonical representatives provides the basis for our message representation. Public multiplication on randomly masked codewords additionally requires that products of arbitrary representatives preserve the intended message. We formulate this requirement through compatible message maps on a sequence of code spaces and verify it for the two constructions below. While error correction involves recovering a message from a corrupted representation, encryption also requires hiding the message from observers; consequently, the recovery properties of coset codes must be complemented by an independent security analysis.

\section{The Idea of the Encryption Scheme}
\label{sec:scheme-overview}

This section outlines the core idea of the proposed scheme. The fundamental concept involves using a coset to jointly represent multiple message symbols, where distinct codewords within the coset correspond to the same message vector. During encryption, a codeword is randomly selected, and errors are introduced outside a predefined set of ``good positions.'' The receiver utilizes these specific positions to identify the coset to which the message belongs and subsequently recovers the complete message vector.

We first introduce the subspaces that carry the message and the random mask. Next, we describe how the encryption process combines these components with errors and how the receiver recovers the message. Finally, we present the public addition operation and the conditions required to perform multiplication without decrypting the input data. Figure~\ref{fig:scheme-overview} summarizes the transformation from plaintext to ciphertext, the public computation process, and the receiver's recovery process. Relevant formal definitions and compatibility conditions are provided in Section~\ref{sec:preliminaries}.

\subsection{Message and Mask Subspaces}
\label{subsec:overview-key}

Let $\mathcal C_2 = \mathcal C_1 \oplus K \subseteq \mathbb F_q^n$ be an internal direct sum. The subspace $\mathcal C_1$ contains message representatives, while $K$ contains random masks. Each ``clean'' codeword can be uniquely decomposed into these two components.

A fixed linear bijection maps each message vector $\boldsymbol m \in \mathbb F_q^k$ to a representative $\boldsymbol c_{\boldsymbol m} \in \mathcal C_1$ (where $k = \dim \mathcal C_1$), with the corresponding message coset being $\boldsymbol c_{\boldsymbol m} + K$. Within this coset, adding any element from $K$ alters the codeword but leaves the represented message unchanged. Consequently, distinct codewords chosen at random can carry the same underlying message vector.

The sender and receiver share the message basis, bases specifying the mask spaces at all supported computation levels, and a set $I$ of good positions where no errors are introduced; the remaining positions constitute $B$. The nested mask bases can be stored as prefixes of a single basis, as specified in the setup algorithm. Both coordinate sets remain fixed under a key. Field representation, dimensions, algorithms, and the supported computation range are public information; this is a symmetric encryption scheme. The receiver derives the appropriate decoder from the secret key and the public computation level.

\subsection{Encryption and Message Recovery}
\label{subsec:overview-recovery}

For encryption, the message is first represented as $\boldsymbol c_{\boldsymbol m}$. A uniformly random mask $\boldsymbol z \in K$ is chosen, and an error vector $\boldsymbol e$ is independently selected; this error vector is distributed uniformly and independently over $B$ and is zero over $I$. The resulting ciphertext is denoted as $\boldsymbol w_{\boldsymbol m}$ and transmitted.
\begin{equation}
    \boldsymbol w_{\boldsymbol m}
       =\underbrace{\boldsymbol c_{\boldsymbol m}}_{\text{message representative}}
        +\underbrace{\boldsymbol z}_{\text{random mask}}
        +\underbrace{\boldsymbol e}_{\text{error}}.
    \label{eq:overview-encryption}
\end{equation}
The first two terms, $\boldsymbol c_{\boldsymbol m} + \boldsymbol z$, constitute a random codeword within the message coset. Errors may cause the ciphertext to fall outside $\mathcal C_2$. Each new encryption operation employs fresh, mutually independent mask and error values. Consequently, while $\boldsymbol c_{\boldsymbol m}$ is determined by the message, $\boldsymbol w_{\boldsymbol m}$ depends on the randomness inherent in the encryption process.

The receiver reads the information at the ``good'' positions and treats the other positions as erasures. The choice of the secret key ensures that if any two ``clean'' codewords match at these good positions, they belong to the same message coset. Consequently, these positions uniquely identify a coset, even if they do not uniquely identify a specific clean codeword. The unique representative of this coset in $\mathcal C_1$ determines the original message vector; there is no need to reconstruct the original random mask. If no clean codeword matches the received information at the good positions, the input is rejected.

\begin{figure}[tbp]
    \centering
    \begin{tikzpicture}[
        font=\small,
        data/.style={align=center, inner sep=3pt},
        heading/.style={align=center, text width=2.5cm},
        flow/.style={-{Stealth[length=2mm]}, line width=0.6pt},
        operation/.style={font=\scriptsize, align=center},
        group/.style={decorate, decoration={brace, amplitude=4pt}}
    ]
        \node[heading] at (0,0.75) {Plaintext};
        \node[heading] at (3,0.75) {Message\\representative};
        \node[heading] at (6,0.75) {Randomized\\codeword};
        \node[heading] at (9,0.75) {Ciphertext};
        \node[heading] at (12,0.75) {Plaintext};

        \node[data] (message) at (0,0) {$\boldsymbol m$};
        \node[data] (representative) at (3,0)
            {$\boldsymbol c_{\boldsymbol m}$};
        \node[data] (masked) at (6,0)
            {$\boldsymbol c_{\boldsymbol m}+\boldsymbol z$};
        \node[data] (cipher) at (9,0)
            {$\boldsymbol w_{\boldsymbol m}$};
        \node[data] (recovered) at (12,0) {$\boldsymbol m$};

        \draw[flow] (message) --
            node[operation, above=2pt] {$\mathsf{Encode}$}
            (representative);
        \draw[flow] (representative) --
            node[operation, above=2pt] {$+\boldsymbol z$}
            node[operation, below=2pt] {Add mask}
            (masked);
        \draw[flow] (masked) --
            node[operation, above=2pt] {$+\boldsymbol e$}
            node[operation, below=2pt] {Add errors}
            (cipher);
        \draw[flow] (cipher) --
            node[operation, above=2pt]
                {$\mathsf{Decrypt}_{\mathsf{sk}}$}
            (recovered);

        \draw[group] (-0.45,1.4) -- (8.9,1.4)
            node[midway, above=7pt] {\textbf{Encryption}};
        \draw[group] (9.1,1.4) -- (12.45,1.4)
            node[midway, above=7pt] {\textbf{Decryption}};
    \end{tikzpicture}

    \smallskip
    {\small (a) From plaintext to ciphertext and back.\par}
    \bigskip

    \begin{tikzpicture}[
        font=\small,
        data/.style={align=center, inner sep=3pt},
        flow/.style={-{Stealth[length=2mm]}, line width=0.6pt},
        operation/.style={font=\scriptsize, align=center}
    ]
        \node[data] (inputs) at (0,0)
            {$(\boldsymbol w_{\boldsymbol m},\boldsymbol w_{\boldsymbol m'})$};
        \node[data] (sum) at (5,0)
            {$\boldsymbol w_{\boldsymbol m}+\boldsymbol w_{\boldsymbol m'}$};
        \node[data] (result) at (10.6,0)
            {$\boldsymbol m+\boldsymbol m'$};
        \draw[flow] (inputs) --
            node[operation, above=2pt] {$+$}
            node[operation, below=2pt] {Public addition}
            (sum);
        \draw[flow] (sum) --
            node[operation, above=2pt]
                {$\mathsf{Decrypt}_{\mathsf{sk}}$}
            (result);
    \end{tikzpicture}

    \smallskip
    {\small (b) Public addition, followed by decryption using the key.\par}
    \bigskip

    \begin{tikzpicture}[
        font=\small,
        data/.style={align=center, inner sep=3pt},
        flow/.style={-{Stealth[length=2mm]}, line width=0.6pt},
        operation/.style={font=\scriptsize, align=center}
    ]
        \node[data] (inputs) at (0,0) {$(\boldsymbol w_{\boldsymbol m},\boldsymbol w_{\boldsymbol m'})$};
        \node[data] (product) at (5,0) {$\boldsymbol w_{\times}=\boldsymbol w_{\boldsymbol m}\odot\boldsymbol w_{\boldsymbol m'}$};
        \node[data] (result) at (10.6,0) {$\boldsymbol m\odot\boldsymbol m'$};
        \draw[flow] (inputs) -- node[operation, above=2pt] {$\odot$} node[operation, below=2pt] {Public multiplication} (product);
        \draw[flow] (product) -- node[operation, above=2pt] {$\mathsf{Decrypt}^{\mathrm{prod}}_{\mathsf{sk}}$} (result);
    \end{tikzpicture}

    \smallskip
    {\small (c) Public multiplication, followed by decryption using the product-code decoder.\par}
    \caption{Encoding, encryption, decryption, and public computations. In (a), $\boldsymbol z\in K$ and $\boldsymbol e$ is zero on the good positions $I$. Decryption uses these positions to recover the message coset and its message; it need not recover the original randomized codeword. Panels (b) and (c) perform ciphertext operations before decryption. In (c), $\odot$ denotes coordinatewise multiplication, and correctness requires compatible message cosets and a recoverable product code, as described in Section~\ref{subsec:overview-computation}. Only the receiver uses the key to decrypt the resulting ciphertext.}
    \label{fig:scheme-overview}
\end{figure}

\subsection{Homomorphic Computation}
\label{subsec:overview-computation}

% Addition operations are performed directly on the ciphertext vectors. The values representing the message, the masks, and the errors are added within their respective subspaces, such that the result represents the sum of the messages.

For multiplication, the evaluator computes the coordinatewise product directly:
\begin{equation}
    \boldsymbol w_{\times}=\boldsymbol w_{\boldsymbol m}\odot\boldsymbol w_{\boldsymbol m'}.
    \label{eq:overview-public-multiplication}
\end{equation}
% The evaluator neither decrypts the inputs nor re-encrypts the outputs. 
On the set of positions $I$, the errors in both inputs are zero; consequently, the product restricted to these positions equals the coordinate-wise product of the two ``clean" codewords (error-free codewords) restricted to $I$. It follows that the error introduced by multiplication is confined to the set of positions $B$.

The chosen family of coset codes must also ensure that, for any choice of masks $\boldsymbol z, \boldsymbol z' \in K$, their clean product represents $\boldsymbol m\odot\boldsymbol m'$. Specifically, product terms involving the masks must not alter the output message. In the output code, ``good'' positions (i.e., error-free positions) must be capable of distinguishing between different message cosets. These additional requirements concern compatibility and recovery; the internal direct sum $\mathcal C_2=\mathcal C_1\oplus K$ alone is insufficient to satisfy them. The receiver uses the corresponding output-code decoder only when it needs the computed message.

The size of the product code may exceed that of $\mathcal C_2$. Consequently, repeated multiplication requires compatible code spaces and recovery conditions at each supported level. A public level label tracks the formal degree of the computation and identifies the corresponding output code space for the receiver. The construction specifies a maximum supported degree, rather than a bound on the number of multiplication gates. The code families below verify these compatibility and recovery conditions; an arbitrary coset code need not support multiplication within its original space.

Ciphertexts resulting from public addition and multiplication are derived from existing inputs and do not introduce new encryption randomness. Their outputs need not exhibit the distributional properties of "fresh encryption." While the "fresh sample budget" tracks encryption instances independently randomized under the same key, security analysis must account for information leaked through the public evaluation process. Packing more symbols into a single ciphertext does not eliminate the inherent limitations associated with repeated encryption under the same key.

\section{Notation and Preliminaries}
\label{sec:preliminaries}

This section begins by defining linear codes and coset codes, followed by an introduction to code products, evaluation codes, and encoding. Next, we outline the conditions required for message recovery from synchronization errors and for performing public computations on ciphertexts.

\subsection{Codes, Cosets, and Encoding}
\label{subsec:prelim-vectors}

All vector spaces and linear maps are over the finite field $\mathbb F_q$, where $q$ is a prime power. Vectors are columns, $[n]=\{1,\ldots,n\}$, and $\boldsymbol v_I$ denotes restriction to $I\subseteq[n]$ in increasing coordinate order. Write $\operatorname{supp}(\boldsymbol v)=\{i\in[n]:v_i\ne 0\}$, $\operatorname{wt}(\boldsymbol v)=|\operatorname{supp}(\boldsymbol v)|$, and $d_H(\boldsymbol u,\boldsymbol v)=\operatorname{wt}(\boldsymbol u-\boldsymbol v)$ for support, Hamming weight, and Hamming distance, respectively. For a finite nonempty set $S$, $U(S)$ denotes its uniform distribution and $x\leftarrow U(S)$ denotes sampling from it.

\begin{definition}[Linear code]
\label{def:linear-code}
An $[n,k,d]_q$ linear code $\mathcal C$ is a $k$-dimensional subspace of $\mathbb F_q^n$ with minimum Hamming distance $d(\mathcal C)=d$. A generator matrix $G\in\mathbb F_q^{n\times k}$ has columns forming a basis of $\mathcal C$, so $\mathcal C=\{G\boldsymbol u:\boldsymbol u\in\mathbb F_q^k\}$.
\end{definition}

To represent a message by several possible codewords, we partition a linear code into cosets of a subcode.

\begin{definition}[Cosets and coset codes]
\label{def:coset-code}
Let $K\subseteq\mathcal C_2\subseteq\mathbb F_q^n$ be linear codes, with $h=\dim K$ and $k=\dim\mathcal C_2-h\geq 1$. For $\boldsymbol c\in\mathcal C_2$, its coset modulo $K$ and the resulting coset code are
\[
    \boldsymbol c+K=\{\boldsymbol c+\boldsymbol z:\boldsymbol z\in K\},
    \qquad \mathcal C_2/K=\{\boldsymbol c+K:\boldsymbol c\in\mathcal C_2\}.
\]
Two codewords belong to the same coset exactly when their difference lies in $K$. These cosets partition $\mathcal C_2$: each contains $q^h$ codewords, and there are $q^k$ cosets. The quotient is a $k$-dimensional vector space with operations
\[
    (\boldsymbol c+K)+(\boldsymbol c'+K)=(\boldsymbol c+\boldsymbol c')+K,
    \qquad \alpha(\boldsymbol c+K)=\alpha\boldsymbol c+K.
\]
Its elements are cosets, rather than individual length-$n$ vectors. To select representatives, choose $\mathcal C_1\subseteq\mathcal C_2$ satisfying the internal direct sum
\[
    \mathcal C_2=\mathcal C_1\oplus K,
    \qquad \mathcal C_1\cap K=\{\boldsymbol 0\}.
\]
Here $\oplus$ means that every codeword in $\mathcal C_2$ is uniquely a sum of a vector in $\mathcal C_1$ and a vector in $K$. Consequently, $\dim\mathcal C_1=k$, and every coset has a unique representative in $\mathcal C_1$.
\end{definition}

The dimension $k$ determines how many message symbols a coset carries, while $h$ determines the number of independent field symbols used to select a random representative within it. Thus a length-$n$ codeword from $\mathcal C_2$ contains both message and masking components, with $\dim\mathcal C_2=k+h$.

\begin{definition}[Product of linear codes]
\label{def:code-product}
For equal-length vectors, define $\boldsymbol u\odot\boldsymbol v=(u_i v_i)_i$. The coordinatewise product, or Schur product, of linear codes $\mathcal C,\mathcal C'\subseteq\mathbb F_q^n$ is
\[
    \mathcal C\star\mathcal C'=\operatorname{span}_{\mathbb F_q}\{\boldsymbol c\odot\boldsymbol c':\boldsymbol c\in\mathcal C,\ \boldsymbol c'\in\mathcal C'\}.
\]
Thus $\odot$ denotes a vector product, whereas $\star$ denotes all linear combinations of such products. Unlike sums of codewords in one linear code, products need not remain in that code: $\mathcal C\star\mathcal C$ need not be contained in $\mathcal C$. We also use $\odot$ for coordinatewise multiplication of length-$k$ messages.
\end{definition}

Evaluation codes provide a common setting for these products, including the code families considered in~\cite{AAPS11, YGJHomECC}.

\begin{definition}[Evaluation code]
\label{def:evaluation-code}
Let $\Omega$ be a set, let $X=(x_1,\ldots,x_n)$ be an ordered tuple of distinct points in $\Omega$, and let $\mathcal V$ be a finite-dimensional linear space of functions $\Omega\to\mathbb F_q$. For a function $f$, define its evaluation vector and the associated code by
\begin{align*}
    \operatorname{ev}_X(f)&=(f(x_1),\ldots,f(x_n))^{\mathsf T},\\
    \mathcal C_X(\mathcal V)&=\operatorname{ev}_X(\mathcal V)
       =\{\operatorname{ev}_X(f):f\in\mathcal V\}\subseteq\mathbb F_q^n.
\end{align*}
Evaluation is linear, so $\mathcal C_X(\mathcal V)$ is a linear code, with
\[
    \dim\mathcal C_X(\mathcal V)=\dim\mathcal V-\dim\ker(\operatorname{ev}_X|_{\mathcal V}).
\]
Its dimension equals $\dim\mathcal V$ precisely when evaluation is injective on $\mathcal V$; in general, different functions may give the same codeword.
\end{definition}

For example, evaluating univariate polynomials of degree less than $r$ at $n$ distinct points of $\mathbb F_q$, with $1\leq r\leq n\leq q$, gives a Reed--Solomon code of parameters $[n,r,n-r+1]_q$. More generally, pointwise function multiplication satisfies
\begin{equation}
    \operatorname{ev}_X(fg)=\operatorname{ev}_X(f)\odot\operatorname{ev}_X(g).
    \label{eq:prelim-evaluation-product}
\end{equation}
For function spaces $\mathcal V,\mathcal W$ on the same set, put $\mathcal V\mathcal W=\operatorname{span}_{\mathbb F_q}\{fg:f\in\mathcal V,\ g\in\mathcal W\}$. Linearity of evaluation then gives $\mathcal C_X(\mathcal V)\star\mathcal C_X(\mathcal W)=\mathcal C_X(\mathcal V\mathcal W)$. This describes the product code, but compatibility with the intended message multiplication remains an additional requirement.

\begin{definition}[Encoding and message extraction]
\label{def:message-map}
Fix an ordered basis of $\mathcal C_1$, with generator matrix $G_1\in\mathbb F_q^{n\times k}$. The canonical encoder is the linear bijection
\begin{equation}
    \mathsf{Encode}:\mathbb F_q^k\longrightarrow\mathcal C_1,
    \qquad \mathsf{Encode}(\boldsymbol m)=G_1\boldsymbol m=\boldsymbol c_{\boldsymbol m}.
    \label{eq:prelim-encoding}
\end{equation}
The corresponding coset encoder is $\overline{\mathsf{Encode}}(\boldsymbol m)=\boldsymbol c_{\boldsymbol m}+K\in\mathcal C_2/K$. In the reverse direction, the message-extraction map $\pi:\mathcal C_2\to\mathbb F_q^k$ is
\begin{equation}
    \pi(\boldsymbol c_{\boldsymbol m}+\boldsymbol z)=\boldsymbol m,
    \qquad \boldsymbol z\in K.
    \label{eq:prelim-message-map}
\end{equation}
This map is linear, $\ker\pi=K$, and $\pi\circ\mathsf{Encode}=\operatorname{id}_{\mathbb F_q^k}$. It assigns the same message to every representative of a coset, and its induced map on $\mathcal C_2/K$ is the inverse of the coset encoder.
\end{definition}

The encoding process maps a message to its canonical codeword or message coset, while the message extraction process performs the inverse operation on the "clean" representative. We denote the clean codeword as $\boldsymbol c$ and the ciphertext as $\boldsymbol w$. The randomized encryption process $\mathsf{Encrypt}$ also selects a mask and introduces noise; consequently, the noisy ciphertext does not necessarily fall within the domain of $\pi$, necessitating the use of the recovery conditions described below. The choice of $G_1$ determines the message coordinates, and the compatibility of these coordinates with the multiplication operation must be verified.

\paragraph{Average-free exponents.}
The Reed--Solomon construction uses integer exponent sets that separate pure powers from mixed products. We use the following average-free condition~\cite[Sec.~3.2]{ABHS19AverageFree}.

\begin{definition}[Average-free exponent set]
\label{def:average-free-exponents}
For an integer $D\geq2$, a set $S\subseteq\mathbb Z$ is $D$-average-free if, for every $2\leq j\leq D$ and $a,a_1,\ldots,a_j\in S$,
\begin{equation}
    a_1+\cdots+a_j=ja
       \quad\Longrightarrow\quad a_1=\cdots=a_j=a.
    \label{eq:average-free-condition}
\end{equation}
The summands may repeat, and the equation is over the integers, not over $\mathbb F_q$. For $D=2$, this is equivalent to being 3-AP-free. Define
\[
    a_D(N)=\max\{|S|:S\subseteq[N]\text{ is }D\text{-average-free}\}.
\]
\end{definition}

For a positive integer $j$ and a nonempty integer set $S$, let $jS=\{a_1+\cdots+a_j:a_i\in S\}$ denote its $j$-fold sumset, allowing repeated summands. This notation concerns integer exponents, not addition in the finite field.

\subsection{Recovery and Public Computation}
\label{subsec:prelim-recovery}

Let $I\subseteq[n]$ be the good positions, $T=|I|$, $B=[n]\setminus I$, and $b=n-T$. Define
\begin{equation}
    E_B=\{\boldsymbol e\in\mathbb F_q^n:\boldsymbol e_I=\boldsymbol 0\}
       =\{\boldsymbol e\in\mathbb F_q^n:\operatorname{supp}(\boldsymbol e)\subseteq B\}.
    \label{eq:prelim-error-space}
\end{equation}
The set $I$ stays fixed under one key, and the receiver treats $B$ as erasures. Fresh masks and errors are sampled from $U(K)$ and $U(E_B)$, independently across draws conditional on the fixed setup and key. Sampling from $U(E_B)$ gives independent uniform entries on $B$ and zeros on $I$; entries on $B$ may also be zero.

\begin{lemma}[Recovery of a message coset]
\label{lem:prelim-recovery}
Let $\varphi:\mathcal C\to\mathbb F_q^k$ be a surjective linear map on a linear code $\mathcal C\subseteq\mathbb F_q^n$, with $\mathcal K=\ker\varphi$. For all $\boldsymbol c\in\mathcal C$ and $\boldsymbol e\in E_B$, the message $\varphi(\boldsymbol c)$ is determined by $(\boldsymbol c+\boldsymbol e)_I$ if and only if
\begin{equation}
    \mathcal C\cap E_B\subseteq\mathcal K.
    \label{eq:prelim-recovery-condition}
\end{equation}
\end{lemma}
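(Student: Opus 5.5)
The plan is to recast ``determined by'' as a statement about the restriction map and then prove the two implications separately. Precisely, the message $\varphi(\boldsymbol c)$ is determined by $(\boldsymbol c+\boldsymbol e)_I$ for all $\boldsymbol c\in\mathcal C$ and $\boldsymbol e\in E_B$ when the following holds: whenever $\boldsymbol c,\boldsymbol c'\in\mathcal C$ and $\boldsymbol e,\boldsymbol e'\in E_B$ satisfy $(\boldsymbol c+\boldsymbol e)_I=(\boldsymbol c'+\boldsymbol e')_I$, we have $\varphi(\boldsymbol c)=\varphi(\boldsymbol c')$. Equivalently, some function $g$ on $\mathbb F_q^{T}$ satisfies $g((\boldsymbol c+\boldsymbol e)_I)=\varphi(\boldsymbol c)$. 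Since $\boldsymbol e_I=\boldsymbol 0$ for every $\boldsymbol e\in E_B$, the observation $(\boldsymbol c+\boldsymbol e)_I$ equals $\boldsymbol c_I$. So the condition reduces to the following: $\boldsymbol c_I=\boldsymbol c'_I$ with $\boldsymbol c,\boldsymbol c'\in\mathcal C$ implies $\varphi(\boldsymbol c)=\varphi(\boldsymbol c')$. I will use this reduction in both directions.

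For sufficiency, assume $\mathcal C\cap E_B\subseteq\mathcal K$ and take $\boldsymbol c,\boldsymbol c'\in\mathcal C$ with $\boldsymbol c_I=\boldsymbol c'_I$. Then $\boldsymbol c-\boldsymbol c'$ lies in $\mathcal C$ by linearity, and it vanishes on $I$, so it lies in $E_B$. Hence $\boldsymbol c-\boldsymbol c'\in\mathcal K=\ker\varphi$, and linearity of $\varphi$ gives $\varphi(\boldsymbol c)=\varphi(\boldsymbol c')$. This defines the decoder $g$ on the image $\{\boldsymbol c_I:\boldsymbol c\in\mathcal C\}$; outside that image $g$ can be set arbitrarily, or, as in the paper's decryption, used to reject.

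For necessity, suppose the determination property holds and let $\boldsymbol v\in\mathcal C\cap E_B$. Compare the pair $\boldsymbol c=\boldsymbol v$ with $\boldsymbol c'=\boldsymbol 0$, both taking $\boldsymbol e=\boldsymbol 0$ (or equivalently the pair $\boldsymbol c=\boldsymbol 0$, $\boldsymbol e=\boldsymbol v$ against $\boldsymbol c=\boldsymbol v$, $\boldsymbol e=\boldsymbol 0$). The observations coincide, since $\boldsymbol v_I=\boldsymbol 0=\boldsymbol 0_I$. Therefore $\varphi(\boldsymbol v)=\varphi(\boldsymbol 0)=\boldsymbol 0$, so $\boldsymbol v\in\mathcal K$, which is \eqref{eq:prelim-recovery-condition}.

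There is no real technical obstacle. The main care point is fixing the precise meaning of ``determined by'' so that the quantifier over all $\boldsymbol c$ and $\boldsymbol e$ is used correctly in the necessity direction. Surjectivity of $\varphi$ is not needed for the equivalence; it only ensures that every message in $\mathbb F_q^k$ actually occurs. I will remark that the condition is equivalent to injectivity of the induced map $\mathcal C/\mathcal K\to\mathbb F_q^{T}$, $\boldsymbol c+\mathcal K\mapsto\boldsymbol c_I$, because its kernel is $(\mathcal C\cap E_B+\mathcal K)/\mathcal K$. This form explains the earlier informal claim that the good positions identify a coset without identifying the clean codeword.
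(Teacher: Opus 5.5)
Your proof is correct and follows the paper's argument. Sufficiency holds because two clean codewords agreeing on $I$ differ by an element of $\mathcal C\cap E_B\subseteq\mathcal K$. Necessity comes from the ambiguity $\boldsymbol v=\boldsymbol 0+\boldsymbol v=\boldsymbol v+\boldsymbol 0$ for $\boldsymbol v\in\mathcal C\cap E_B$, which is exactly the paper's converse; your primary pairing against $\boldsymbol c'=\boldsymbol 0$, $\boldsymbol e=\boldsymbol e'=\boldsymbol 0$ works equally well.

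Your closing remark needs a fix. The map $\boldsymbol c+\mathcal K\mapsto\boldsymbol c_I\in\mathbb F_q^T$ is well defined only if every element of $\mathcal K$ vanishes on $I$. That fails in the intended setting, e.g.\ whenever $\mathcal C\cap E_B=\{\boldsymbol 0\}$ and $\mathcal K\ne\{\boldsymbol 0\}$, as in both towers. The correct statement is injectivity of $\mathcal C/\mathcal K\to\mathbb F_q^T/\mathcal K_I$, $\boldsymbol c+\mathcal K\mapsto\boldsymbol c_I+\mathcal K_I$, where $\mathcal K_I=\{\boldsymbol z_I:\boldsymbol z\in\mathcal K\}$. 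It is this map whose kernel is $(\mathcal C\cap E_B+\mathcal K)/\mathcal K$.
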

The proof is given in Section~\ref{subsec:scheme-correctness}.

The criterion defines a linear recovery map $\boldsymbol c_I\mapsto\varphi(\boldsymbol c)$ without requiring recovery of the original codeword. Fresh ciphertexts use $(\mathcal C,\mathcal K)=(\mathcal C_2,K)$. The stronger bound $b<d(\mathcal C)$ suffices to recover the entire codeword.

\begin{definition}[Compatible family of message codes]
\label{def:compatible-family}
For a positive integer $\Delta$, let $\mathcal C^{(\ell)}$ denote the code at computation level $\ell$. These codes are nested, with fresh code $\mathcal C^{(1)}=\mathcal C_2$:
\[
    \mathcal C^{(1)}=\mathcal C_2\subseteq\mathcal C^{(2)}\subseteq\cdots\subseteq\mathcal C^{(\Delta)}\subseteq\mathbb F_q^n.
\]
Equip these codes with surjective linear maps $\pi_\ell:\mathcal C^{(\ell)}\to\mathbb F_q^k$, with $\pi_1=\pi$ and $K_\ell=\ker\pi_\ell$. For indices $\ell,r,s\in[\Delta]$, require
\begin{align}
    \left.\pi_s\right|_{\mathcal C^{(r)}}&=\pi_r &&(r\leq s),\label{eq:prelim-level-compatibility}\\
    \mathcal C^{(r)}\star\mathcal C^{(s)}&\subseteq\mathcal C^{(r+s)} &&(r+s\leq\Delta),\label{eq:prelim-product-inclusion}\\
    \pi_{r+s}(\boldsymbol c\odot\boldsymbol c')&=\pi_r(\boldsymbol c)\odot\pi_s(\boldsymbol c') &&(r+s\leq\Delta),\label{eq:prelim-message-product}\\
    \mathcal C^{(\ell)}\cap E_B&\subseteq K_\ell &&(\ell\in[\Delta]).\label{eq:prelim-level-recovery}
\end{align}
Here $\left.\pi_s\right|_{\mathcal C^{(r)}}$ denotes the restriction of the domain of $\pi_s$ from $\mathcal C^{(s)}$ to its subcode $\mathcal C^{(r)}$, leaving the codomain $\mathbb F_q^k$ unchanged. Explicitly, for $r\leq s$, this restricted map sends $\boldsymbol c\in\mathcal C^{(r)}$ to $\pi_s(\boldsymbol c)$. Equation~\eqref{eq:prelim-level-compatibility} requires $\pi_s(\boldsymbol c)=\pi_r(\boldsymbol c)$ for every such codeword, so regarding it as an element of a larger code space does not change its message. This restricts the allowable input codewords, whereas $\boldsymbol c_I$ restricts a vector to selected coordinates.

The product identity must hold for every $\boldsymbol c\in\mathcal C^{(r)}$ and $\boldsymbol c'\in\mathcal C^{(s)}$, including masked representatives. These conditions ensure consistent message maps, compatible products, and recovery from the same good positions.
\end{definition}

Compatibility gives $K_1=K$ and $\mathcal C^{(\ell)}=\mathcal C_1\oplus K_\ell$, with the same canonical representatives at every level. For $r+s\leq\Delta$, it also gives $K_r\star\mathcal C^{(s)}+\mathcal C^{(r)}\star K_s\subseteq K_{r+s}$, so mask products cannot change the output message. Errors remain in $E_B$ because
\begin{equation}
    (\boldsymbol c+\boldsymbol e)\odot(\boldsymbol c'+\boldsymbol e')
      =\boldsymbol c\odot\boldsymbol c'
       +\underbrace{\boldsymbol c\odot\boldsymbol e'+\boldsymbol e\odot\boldsymbol c'+\boldsymbol e\odot\boldsymbol e'}_{\in E_B},
    \label{eq:prelim-product-errors}
\end{equation}
for $\boldsymbol e,\boldsymbol e'\in E_B$. Together with Lemma~\ref{lem:prelim-recovery}, this justifies public multiplication under the stated conditions; it does not establish security for a concrete code family.

Fresh ciphertexts, including encrypted constants, are at level $1$. For inputs at levels $r$ and $s$, the level becomes $\max\{r,s\}$ after addition and $r+s$ after multiplication, provided $r+s\leq\Delta$. These labels bound the formal degree, rather than the circuit depth; for example, two successive squarings have depth two and level four. The evaluator requires neither secret message maps nor decoders. The parameter $L$ limits the number of independently randomized fresh encryptions exposed under the same key. Public evaluation produces derived ciphertexts without consuming fresh encryption samples, and its outputs need not have the distribution of fresh encryptions. Thus $\Delta$ bounds supported computations, while a secure value of $L$ depends on the code family, setup distribution, and hardness assumption.

\section{The Encryption Scheme}
\label{sec:encryption}

This section defines the scheme outlined in Section~\ref{sec:scheme-overview}, using the compatible code spaces of Definition~\ref{def:compatible-family}. We specify key generation, encryption, decryption, and the two public operations. The correctness and security guarantees are collected in Section~\ref{sec:main-results}, with their proofs in Section~\ref{sec:proofs}. The algorithms assume an efficiently sampled compatible instance; a concrete code family and its setup distribution require separate analysis.

\subsection{Description and Encryption/Decryption Algorithms}
\label{subsec:scheme-algorithms}

Consider a publicly specified family of compatible instances that allows for efficient sampling. Let $\lambda$ denote the target security parameter, $\Delta$ the supported bound on the degree of the forms, and $L$ the proposed new bound on exposure to the encryption. The family specifies how its sampler selects the domain, code space, encoding, and "good" positions for the supported parameters. The algorithms described below assume that the instances provided by this sampler satisfy Definition \ref{def:compatible-family}.

A ciphertext is a tuple $(\boldsymbol w, \ell) \in \mathbb{F}_q^n \times [\Delta]$, where $\ell$ is the public computation level introduced in the preliminaries. The ciphertext represents $\boldsymbol m \in \mathbb{F}_q^k$ if the following condition holds:
\begin{equation}
\boldsymbol w \in \mathsf{Encode}(\boldsymbol m) + K_\ell + E_B.
\label{eq:scheme-valid-ciphertext}
\end{equation}
This supplements the ciphertext vector from the overview with a level label. All inputs to an evaluation operation must use the same key. The encryption and decryption processes implicitly use public parameters. Algorithms receiving a ciphertext reject invalid vector dimensions, field elements, or level labels; $\bot$ denotes rejection. The algorithm comprises five distinct components: $\mathsf{Setup}$ (initialization), $\mathsf{Encrypt}$ (encryption), $\mathsf{Decrypt}$ (decryption), $\mathsf{Add}$ (addition), and $\mathsf{Mult}$ (multiplication), as detailed below.

\medskip
\noindent$\mathsf{Setup}(1^\lambda,\Delta, L)$: Run the selected family sampler to obtain a compatible instance; return $\bot$ if the parameters are unsupported. Let $h_\ell = \dim K_\ell$, such that $h_1 = h$. Compatibility implies $K_1 \subseteq \cdots \subseteq K_\Delta$. Select a generator matrix $G_{K,\Delta}$ for $K_\Delta$ such that, for each $\ell$, its first $h_\ell$ columns (denoted $G_{K,\ell}$) form a basis for $K_\ell$. Such a basis can be obtained by incrementally extending the basis along the sequence of nested spaces. Let $G_1$ be the generator matrix defining the $\mathsf{Encode}$ operation (as shown in \eqref{eq:prelim-encoding}). Output the public parameters and the shared secret key.
    \begin{align}
        \mathsf{pp}&=\bigl(q,n,k,T,(h_\ell)_{\ell=1}^{\Delta},\Delta,L\bigr),\label{eq:scheme-public-parameters}\\
        \mathsf{sk}&=(G_1,G_{K,\Delta},I).\label{eq:scheme-secret-key}
    \end{align}
    The public parameters also specify the domain representation, the family sampler, and the associated algorithms. The secret key determines each $K_\ell$ and $\mathcal C^{(\ell)}=\operatorname{im}[\,G_1\;G_{K,\ell}\,]$, thereby defining each $\pi_\ell$. There is no need to set a separate basis for each code space. The sender and receiver share this key via an authenticated, confidential channel.

\medskip
\noindent$\mathsf{Encrypt}(\boldsymbol m,\mathsf{sk})$: For $\boldsymbol m\in\mathbb F_q^k$, independently sample $\boldsymbol r\leftarrow U(\mathbb F_q^h)$ and $\boldsymbol e\leftarrow U(E_B)$. Output
\begin{equation}
    (\boldsymbol w_{\boldsymbol m},1),
    \qquad \boldsymbol w_{\boldsymbol m}=G_1\boldsymbol m+G_{K,1}\boldsymbol r+\boldsymbol e.
    \label{eq:scheme-encrypt}
\end{equation}
Fresh encryption utilizes the masking space $K_1=K$; it employs the first $h$ columns of the storage basis. Each invocation samples fresh, independent randomness given the key.

\medskip
\noindent$\mathsf{Decrypt}((\boldsymbol w,\ell),\mathsf{sk})$: Restrict the operation to "good positions" only and solve the system of equations
\begin{equation}
    G_{1,I}\widehat{\boldsymbol m}+G_{K,\ell,I}\widehat{\boldsymbol r}=\boldsymbol w_I,
    \qquad (\widehat{\boldsymbol m},\widehat{\boldsymbol r})\in\mathbb F_q^k\times\mathbb F_q^{h_\ell},
    \label{eq:scheme-decrypt-system}
\end{equation}
where the matrix subscript $I$ indicates that the rows indexed by $I$ are retained. If there is no solution, return $\bot$; otherwise, return the message component $\widehat{\boldsymbol m}$ from any solution. Lemma~\ref{lem:prelim-recovery} guarantees the uniqueness of this component, even if the masking component is not unique.

\medskip
\noindent$\mathsf{Add}((\boldsymbol w,r),(\boldsymbol w',s);\mathsf{pp})$: Return $(\boldsymbol w+\boldsymbol w',\max\{r,s\})$.

\medskip
\noindent$\mathsf{Mult}((\boldsymbol w,r),(\boldsymbol w',s);\mathsf{pp})$: If $r+s>\Delta$, return $\bot$. Otherwise return $(\boldsymbol w\odot\boldsymbol w',r+s)$.

Both evaluation operations use only publicly available information and do not involve decryption or re-encryption. They are responsible for checking input formats and hierarchy bounds, while algebraic consistency is verified by the recipient. Constants can be provided as "fresh encryptions" of the required message vectors. The bound $L$ concerns the exposure of fresh encryptions under the same key; the algorithm itself does not track such exposure. Sums and products resulting from evaluation do not introduce new encryption randomness.

Given a key and a message, the new vector $\boldsymbol w_{\boldsymbol m}$ is uniformly distributed over the set $G_1\boldsymbol m+(K+E_B)$. Indeed, the summation map from $K\times E_B$ to $K+E_B$ is linear, and the number of pre-images corresponding to each output value is constant. Outputs obtained from valid evaluations of valid ciphertexts satisfy Equation~\eqref{eq:scheme-valid-ciphertext}, but they do not necessarily possess the distribution characteristics of fresh encryptions.

The correctness guarantee is stated in Proposition~\ref{prop:scheme-correctness} in Section~\ref{subsec:results-guarantees}; its proof is given in Section~\ref{subsec:scheme-correctness}.

\section{Security Definitions and Assumption}
\label{sec:security-model}

Before stating the security results, we specify the setup distribution, the observer's information, and the experiments used to measure distinguishing advantage. We first define the coordinate-label problem underlying our assumption, then define bounded chosen-plaintext security and the sample-distribution comparison used in the reduction. This fixes the meaning of the security claims in Section~\ref{sec:main-results}; their proofs are deferred to Section~\ref{sec:security}.

\subsection{Decisional Synchronized Codewords Problem with Independent Errors}
\label{subsec:security-assumption}

We introduce this problem to clarify the computational hardness upon which our confidentiality guarantees rely. The problem asks whether an observer can distinguish between coordinates labeled "good" and those labeled "bad" using a set of noisy codewords. The reduction in Section~\ref{subsec:security-reduction} transforms an adversary capable of distinguishing encryptions of specific messages into an algorithm that solves this "coordinate labeling problem." Consequently, the assumed hardness of this problem—based on the specific sampling distributions and auxiliary information described below—constitutes a sufficient condition for achieving bounded encryption security.

Fix a sequence of supported parameters where $q,n,k,T,(h_\ell)_{\ell=1}^{\Delta},\Delta$, and $L$ are fixed functions of $\lambda$, satisfying $k\geq1$, $1\leq T<n$, and $L\geq1$. The quantities $n,\Delta,L,\lceil\log_2q\rceil$, as well as the running time of the construction algorithm, are polynomially bounded in $\lambda$. Let $\mathcal S$ denote the joint setup distribution generated by the chosen sampler under these parameters, including the code spaces, encoding, and good positions. This distribution is supported on instances satisfying Definition~\ref{def:compatible-family}. Any constraints on $I$ required for recovery are incorporated into $\mathcal S$.

The relevant source data belongs to the zero-message space $K_1=K$. After selecting a setup, a tuple of zero-encryptions $X=(\boldsymbol x_1,\ldots,\boldsymbol x_L)$ is constructed via sampling, where:
\begin{equation}
    \boldsymbol x_j=\boldsymbol z_j+\boldsymbol e_j,
    \qquad \boldsymbol z_j\leftarrow U(K),\quad
    \boldsymbol e_j\leftarrow U(E_B),\quad j\in[L].
    \label{eq:security-zero-samples}
\end{equation}
Given the same setup, all masks and errors are mutually independent. The term "synchronized" refers to the common set $B$ of possible error positions: that is, the "good" set $I$ remains invariant across all $L$ samples. The phrase "independent errors" indicates that, for a fixed setup, fresh and independent uniformly distributed values are sampled over the set $B$ for each encryption. Since these values may be zero, the actual supports of the error vectors do not necessarily coincide. The underlying vectors $\boldsymbol z_j$ are codewords from the code $K$, whereas the noisy observations $\boldsymbol x_j$ do not necessarily belong to $K$.

In the reduction process, we utilize auxiliary information $\mathsf{aux}=(\mathsf{pp}, G_1)$. Although the actual scheme keeps $G_1$ secret, the assumption made below allows the distinguisher to access this information. This provides the simulator with the transformation mapping $\mathsf{Encode}(\boldsymbol m)=G_1\boldsymbol m$ required to answer "chosen-message queries." Throughout this process, neither the set $I$ nor the basis of the code $K$ is revealed. If the system is secure given this additional information, it implies that the original scheme (where the adversary sees only $\mathsf{pp}$) is also secure. However, the security in the presence of this additional information still needs to be demonstrated.

\begin{definition}[Decisional Synchronized Codewords Problem with Independent Errors]
\label{def:security-iid-dscp}
The decisional synchronized codewords problem with independent errors, abbreviated IID-DSCP, compares two experiments. In both experiments, a setup and the tuple $X$ from Eq.~\eqref{eq:security-zero-samples} are sampled from $\mathcal S$. Subsequently, the "good" experiment samples $i\leftarrow U(I)$, while the "bad" experiment samples $i\leftarrow U(B)$. A distinguisher $\mathcal B$ receives $(\mathsf{aux},i,X)$ and outputs a bit; its advantage is defined as
\begin{equation}
\begin{aligned}
\delta_{\mathcal S,L}(\mathcal B)
=\Bigl|&\Pr[\mathcal B(\mathsf{aux},i,X)=1\mid\mathrm{good}]\\
&-\Pr[\mathcal B(\mathsf{aux},i,X)=1\mid\mathrm{bad}]\Bigr|.
\end{aligned}
\label{eq:security-label-advantage}
\end{equation}
The probabilities above account for the randomness of the setup, the samples, the marked coordinate, and the distinguisher itself.
\end{definition}

The word \emph{decisional} refers to distinguishing these two experiments: the index $i$ is visible, but its membership in $I$ is hidden. The task concerns the label of this one coordinate and does not require recovering the entire set $I$ or decoding a codeword. The advantage measures a difference of acceptance probabilities; with equally likely experiments, the better orientation of this test has success probability $1/2+\delta_{\mathcal S,L}(\mathcal B)/2$.

\begin{assumption}[IID-DSCP with encoder information]
\label{ass:security-iid-dscp}
For a given setup distribution $\mathcal{S}$ and sample budget $L=L(\lambda)$, the advantage $\delta_{\mathcal{S},L}(\mathcal{B})$ of any probabilistic polynomial-time distinguisher $\mathcal{B}$ under Definition~\ref{def:security-iid-dscp} is negligible (where the auxiliary information $(\mathsf{pp}, G_1)$ is included).
\end{assumption}

Here, a function is said to be negligible if, for any positive constant $a$, it is eventually less than $\lambda^{-a}$. This is an assumption regarding the average-case behavior of the function family and the distribution, rather than a conclusion derived from the hardness of worst-case decoding.

The name IID-DSCP refers to the variant defined here. In Definition 5 of \cite{AAPS11}, the source coding is known, $I$ is a uniform set of size $T$, and the sampler employs distinct codewords and distinct error vectors (whose supports are exactly $B$). Our source corresponds to the mask space $K$ in that setting; independent sampling allows for repetitions and zero-error terms on $B$, while the distribution of $I$ is the one actually used by $\mathcal S$. Encoder information is also part of our experiment. Consequently, the earlier DSCP assumptions do not automatically imply Assumption \ref{ass:security-iid-dscp}; the reduction in Section~\ref{subsec:security-reduction} addresses precisely this specific variant.

\subsection{Encryption and Sample-Distinguishing Experiments}
\label{subsec:security-experiments}

The encryption experiment describes the confidentiality objective, while the sample experiment compares zero encryptions with uniform vectors. Their connection is proved later; neither definition by itself asserts hardness.

\begin{definition}[Bounded chosen-plaintext security]
\label{def:security-bounded-cpa}
Let $\Pi$ be the scheme described in Section \ref{sec:encryption}. The challenger samples a setup from $\mathcal S$ and provides the public parameters $\mathsf{pp}$ to the adversary $\mathcal A$. The adversary may request fresh encryptions of adaptively chosen messages. At some point, the adversary submits two messages $\boldsymbol m_0, \boldsymbol m_1 \in \mathbb F_q^k$; the challenger samples an independent uniform bit $\theta$ and returns a fresh ciphertext of $\boldsymbol m_\theta$. Encryption queries may continue thereafter, but the total number of fresh ciphertexts returned (including the challenge ciphertext) is at most $L$. The adversary outputs $\widehat\theta$, and its advantage is defined as
\begin{equation}
\operatorname{Adv}^{\mathrm{CPA}}_{\Pi,\mathcal A,L}
=\left|\Pr[\widehat\theta=\theta]-\frac12\right|.
\label{eq:security-cpa-advantage}
\end{equation}
The scheme is said to possess $L$-bounded IND-CPA security if this advantage is negligible for any probabilistic polynomial-time adversary. This model does not include a decryption oracle. Public addition and multiplication operations, as well as any other efficient processing of the adversary's view, may be performed locally.
\end{definition}

The convention in~\eqref{eq:security-cpa-advantage} has maximum advantage $1/2$. The reductions in Section~\ref{subsec:security-reduction} also apply to the stronger experiment that reveals $G_1$. The restriction on $L$ counts fresh encryption outputs, whereas $\Delta$ limits the correctness of prescribed evaluations. Public computations consume no fresh encryption samples, but their results remain available to the adversary.

The intermediate goal is to compare the distributions of real zero encryptions and uniform vectors. Recall that $X=(\boldsymbol x_1,\ldots,\boldsymbol x_L)$ is the tuple from~\eqref{eq:security-zero-samples}, generated under one shared setup. Let $U=(\boldsymbol u_1,\ldots,\boldsymbol u_L)$ consist of independent uniform vectors in $\mathbb F_q^n$, independent of that setup. The two views given to a test are $(\mathsf{aux},X)$ and $(\mathsf{aux},U)$, with the same marginal distribution of $\mathsf{aux}=(\mathsf{pp},G_1)$. For a binary-output algorithm $\mathcal D$, define
\begin{equation}
    \alpha_{\mathcal S,L}(\mathcal D)
      =\left|\Pr[\mathcal D(\mathsf{aux},X)=1]
                  -\Pr[\mathcal D(\mathsf{aux},U)=1]\right|.
    \label{eq:security-sample-advantage}
\end{equation}
If $\alpha_{\mathcal S, L}(\mathcal D)$ is negligible for any efficient distinguisher $\mathcal D$, we say that the two joint views are computationally indistinguishable. This requirement applies to the aggregate of all $L$ samples and their associated auxiliary information; considering individual samples in isolation is insufficient. Regarding the comparison distribution used in the proof, the "uniform tuple" does not require the code family itself to be random or unstructured.

\section{Our Results}
\label{sec:main-results}

We collect the main statements and comparisons before giving their proofs. The logical sequence is to establish which computations the scheme recovers, state the assumption sufficient for confidentiality and a restriction on that assumption, and then show that the algebraic requirements admit explicit code families. The final comparison quantifies the packing benefit over the scalar construction of~\cite{AAPS11}. The constructions establish correctness; their confidentiality remains a separate question.

\subsection{Correctness and Conditional Security}
\label{subsec:results-guarantees}

The first result applies to any compatible family and justifies the encryption and public evaluation algorithms in Section~\ref{sec:encryption}.

\begin{proposition}[Correctness of the scheme]
\label{prop:scheme-correctness}
For every instance satisfying Definition~\ref{def:compatible-family}, decrypting a fresh ciphertext yields its input message. Consider a circuit of the prescribed addition and multiplication operations on valid ciphertexts under the same key. Assign output levels recursively by taking the maximum of the input levels at an addition gate and their sum at a multiplication gate. If every resulting level is at most $\Delta$, decrypting the output yields the corresponding coordinatewise computation on the message vectors.
\end{proposition}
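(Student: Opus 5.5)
The plan is to prove a stronger invariant by structural induction on the circuit: every wire carrying level $\ell\le\Delta$ holds a vector $\boldsymbol w=\boldsymbol c+\boldsymbol e$ with $\boldsymbol c\in\mathcal C^{(\ell)}$, $\boldsymbol e\in E_B$, and $\pi_\ell(\boldsymbol c)$ equal to the intended message. This is exactly the validity condition \eqref{eq:scheme-valid-ciphertext}, because $\mathcal C^{(\ell)}=\mathcal C_1\oplus K_\ell$ and $\pi_\ell$ restricts to the inverse of $\mathsf{Encode}$ on $\mathcal C_1$. Correctness of $\mathsf{Decrypt}$ on any such wire then follows from Lemma~\ref{lem:prelim-recovery} applied to $(\mathcal C,\varphi)=(\mathcal C^{(\ell)},\pi_\ell)$, using condition \eqref{eq:prelim-level-recovery}.

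In more detail, the system \eqref{eq:scheme-decrypt-system} is solvable, since $(\widehat{\boldsymbol m},\widehat{\boldsymbol r})$ can be taken as the coordinates of $\boldsymbol c$ in the basis $[\,G_1\;G_{K,\ell}\,]$, and $\boldsymbol w_I=\boldsymbol c_I$. For uniqueness, suppose two solutions give clean codewords $\boldsymbol c,\boldsymbol c'\in\mathcal C^{(\ell)}$ agreeing on $I$. Then $\boldsymbol c-\boldsymbol c'\in\mathcal C^{(\ell)}\cap E_B\subseteq K_\ell$. Since $G_{K,\ell}$ spans $K_\ell$ and $\mathcal C_1\cap K_\ell=\{\boldsymbol 0\}$, the message components agree. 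They equal $\pi_\ell(\boldsymbol c)$ because $\pi_\ell(G_1\boldsymbol m+\boldsymbol z)=\boldsymbol m$ for $\boldsymbol z\in K_\ell$.

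Proving the lemma is routine. If the inclusion holds and $(\boldsymbol c+\boldsymbol e)_I=(\boldsymbol c'+\boldsymbol e')_I$, then $\boldsymbol c-\boldsymbol c'\in\mathcal C\cap E_B\subseteq\mathcal K$, so $\varphi(\boldsymbol c)=\varphi(\boldsymbol c')$. Conversely, if some $\boldsymbol v\in\mathcal C\cap E_B$ has $\varphi(\boldsymbol v)\ne0$, then $\boldsymbol 0$ and $\boldsymbol v$ agree on $I$ but carry different messages.

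The base case is a fresh ciphertext. Here $\boldsymbol w=G_1\boldsymbol m+G_{K,1}\boldsymbol r+\boldsymbol e$ with clean part in $\mathcal C^{(1)}=\mathcal C_2$ and $\pi_1$ of it equal to $\boldsymbol m$ by \eqref{eq:prelim-message-map}. The fresh-decryption claim is therefore the case $\ell=1$.

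For an addition gate with inputs at levels $r\le s$, I would view both clean parts in $\mathcal C^{(s)}$ via nesting and add the errors in $E_B$. Linearity of $\pi_s$ together with \eqref{eq:prelim-level-compatibility} gives $\pi_s(\boldsymbol c+\boldsymbol c')=\pi_r(\boldsymbol c)+\pi_s(\boldsymbol c')$.

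For a multiplication gate with $r+s\le\Delta$, I would expand the product as in \eqref{eq:prelim-product-errors}. The clean part $\boldsymbol c\odot\boldsymbol c'$ lies in $\mathcal C^{(r+s)}$ by \eqref{eq:prelim-product-inclusion}, and it has message $\pi_r(\boldsymbol c)\odot\pi_s(\boldsymbol c')$ by \eqref{eq:prelim-message-product}, which holds for arbitrary masked representatives. The remaining three terms vanish on $I$, since each has a factor lying in $E_B$. The recursive level rule in the statement matches the labels output by $\mathsf{Add}$ and $\mathsf{Mult}$, so the hypothesis that all levels are at most $\Delta$ ensures that neither operation returns $\bot$ and that the compatibility conditions apply at every gate.

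The main subtlety is the multiplication step. The message identity has to hold for the full masked codewords, not only for the canonical representatives. Definition~\ref{def:compatible-family} supplies exactly this, so the step reduces to bookkeeping: I must make sure the product of two clean parts is treated as a single element of $\mathcal C^{(r+s)}$ rather than decomposed term by term. A secondary point is that when an input is reused, or two inputs share randomness, the errors are not independent. The induction uses only membership in $E_B$, so this causes no difficulty.
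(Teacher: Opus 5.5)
Your proposal is correct and follows essentially the same route as the paper. Both arguments run an induction over gates that maintains $\boldsymbol w=\boldsymbol c+\boldsymbol e$ with $\boldsymbol c\in\mathcal C^{(\ell)}$, $\pi_\ell(\boldsymbol c)$ the intended message, and $\boldsymbol e\in E_B$, using the level-recovery condition for decryption, level compatibility for addition, and the product-inclusion and message-product conditions together with \eqref{eq:prelim-product-errors} for multiplication; your explicit check that the system \eqref{eq:scheme-decrypt-system} is solvable with a unique message component (via $\mathcal C_1\cap K_\ell=\{\boldsymbol 0\}$) fills in a detail the paper leaves implicit.
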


The proof is given in Section~\ref{subsec:scheme-correctness}. Correctness ensures that the intended message computation is recovered; the next theorem addresses whether an observer can distinguish encryptions of chosen messages. It uses the setup distribution and the stronger IID-DSCP assumption with encoder information specified in Section~\ref{sec:security-model}.

\begin{theorem}[Conditional bounded encryption security]
\label{thm:security-cpa-reduction}
For every efficient adversary $\mathcal A$ in the $L$-bounded experiment, including its stronger version revealing $G_1$, there is an efficient IID-DSCP distinguisher $\mathcal B$ such that
\begin{equation}
    \operatorname{Adv}^{\mathrm{CPA}}_{\Pi,\mathcal A,L}
       \leq 2T\,\delta_{\mathcal S,L}(\mathcal B).
    \label{eq:security-cpa-bound}
\end{equation}
Consequently, Assumption~\ref{ass:security-iid-dscp} implies $L$-bounded IND-CPA security for the scheme, including the adversary's public evaluations.
\end{theorem}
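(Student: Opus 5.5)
The plan is a two-stage reduction: CPA to the sample experiment \eqref{eq:security-sample-advantage}, then sample distinguishing to IID-DSCP by a hybrid over good coordinates. The hybrid loses the factor $T$, and the factor $2$ absorbs the bookkeeping between the two experiments.

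\textbf{Stage 1: CPA to sample distinguishing.} Given $\mathcal A$, build a sample distinguisher $\mathcal D$ that receives $(\mathsf{aux},Y)$ with $Y=(\boldsymbol y_1,\ldots,\boldsymbol y_L)$.
\begin{itemize}
\item $\mathcal D$ hands $\mathsf{pp}$ (or also $G_1$, for the stronger experiment) to $\mathcal A$.
\item It answers the $j$-th query, including the challenge, by $(G_1\boldsymbol m+\boldsymbol y_j,1)$, with $\boldsymbol m=\boldsymbol m_\theta$ for the challenge and $\theta$ drawn by $\mathcal D$ itself.
\item It outputs $1$ iff $\widehat\theta=\theta$.
\end{itemize}
If $Y=X$, each answer is $G_1\boldsymbol m+\boldsymbol z_j+\boldsymbol e_j$ with fresh independent $\boldsymbol z_j\leftarrow U(K)$, $\boldsymbol e_j\leftarrow U(E_B)$. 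This is exactly \eqref{eq:scheme-encrypt}, so the CPA experiment, and hence any local public evaluation by $\mathcal A$, is simulated perfectly. If $Y=U$, every answer is uniform and independent of $\theta$, so $\Pr[\widehat\theta=\theta]=1/2$. Hence $\operatorname{Adv}^{\mathrm{CPA}}_{\Pi,\mathcal A,L}\le\alpha_{\mathcal S,L}(\mathcal D)$. Since $L$ counts all fresh outputs, $Y$ of length $L$ suffices.

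\textbf{Stage 2: sample distinguishing to the coordinate label.} The key observation is that the coordinates of $X$ on $B$ are already i.i.d.\ uniform and independent of the masks and of the coordinates on $I$, because $\boldsymbol e_j\leftarrow U(E_B)$. Consequently:
\begin{itemize}
\item Replacing coordinate $i$ of all $L$ samples by fresh uniform values leaves the distribution unchanged when $i\in B$.
\item Doing this for every $i\in I$ turns $X$ into $U$.
\end{itemize}
Order $I$ as $i_1<\cdots<i_T$ and let $H_t$ be $X$ with coordinates $i_1,\ldots,i_t$ re-randomized, so $H_0=X$ and $H_T=U$. Telescoping gives $\alpha\le\sum_{t}|\Pr[\mathcal D(H_{t-1})=1]-\Pr[\mathcal D(H_t)=1]|$.

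$\mathcal B$ must realize these hybrids from $(\mathsf{aux},i,X)$ without knowing $I$. I would do this as follows.
\begin{itemize}
\item $\mathcal B$ draws a uniformly random ordering of $[n]$ conditioned to place $i$ at a random position.
\item It re-randomizes every coordinate preceding $i$ in this ordering; re-randomizing bad coordinates is free, so only the good ones preceding $i$ matter.
\item It runs $\mathcal D$ on the result, leaving coordinate $i$ intact.
\end{itemize}
In the good experiment, coordinate $i$ is still structured, so the view is a uniformly indexed ``left'' hybrid $H_{t-1}$. In the bad experiment, coordinate $i$ is already uniform, so the view coincides with the ``right'' hybrid $H_t$ over the same random prefix. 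Averaging over the uniform choice of $i\in I$ turns the telescoping sum into $T$ times the gap $\delta_{\mathcal S,L}(\mathcal B)$. Slack in matching the random-prefix distribution between the two experiments, together with the advantage convention $|\Pr-1/2|$, is absorbed into the constant $2$.

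\textbf{Expected obstacle.} The hard step is exactly this last one: making the hybrid index sampled by $\mathcal B$ have the same distribution under the good and bad experiments, even though the number of good coordinates preceding $i$ depends on the unknown $I$ and on the setup distribution $\mathcal S$. My first attempt is the random-permutation hybrid over all of $[n]$, indexed by the rank among good coordinates, checking that the conditional distributions match. If they do not match exactly, I would instead condition on a uniformly random rank $t\in[T]$ and bound the mismatch by the extra factor $2$. Efficiency of $\mathcal B$ is immediate, since it only samples uniform field elements and runs $\mathcal D$ (hence $\mathcal A$) once. Combining the two stages gives \eqref{eq:security-cpa-bound}, and Assumption~\ref{ass:security-iid-dscp} then yields negligible advantage.
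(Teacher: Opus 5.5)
Your Stage~1 matches the paper's Lemma~\ref{lem:security-encryption-to-samples}, which even gives equality $\operatorname{Adv}^{\mathrm{CPA}}_{\Pi,\mathcal A,L}=\alpha_{\mathcal S,L}(\mathcal D)$. The gap is in Stage~2, at the step you flagged, and no constant factor can absorb it.

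Take a uniform ordering of $[n]$. If $i$ is uniform in $I$, the number of good coordinates preceding $i$ is uniform on $\{0,\dots,T-1\}$. If $i$ is uniform in $B$, it is uniform on $\{0,\dots,T\}$. Let $\tilde p_s$ be the acceptance probability of $\mathcal D$ when a uniform $s$-subset of $I$ is re-randomized. Re-randomizing bad coordinates changes nothing, and $\tilde p_0,\tilde p_T$ are the acceptance probabilities on $X$ and $U$.

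Your $\mathcal B$ then accepts with probability $\frac1T\sum_{s=0}^{T-1}\tilde p_s$ in the good experiment and $\frac1{T+1}\sum_{s=0}^{T}\tilde p_s$ in the bad one. The difference is $\frac1{T(T+1)}\sum_{s=0}^{T-1}(\tilde p_s-\tilde p_T)$, which does not telescope. For example, with $T=2$ and $(\tilde p_0,\tilde p_1,\tilde p_2)=(1,0,\frac12)$ it is $0$, although $\alpha_{\mathcal S,L}(\mathcal D)=\frac12$.

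Your claim that the bad view ``coincides with the right hybrid over the same random prefix'' holds for each fixed prefix. However, the prefix has different laws in the two experiments, so you cannot subtract them term by term. Your fallback of conditioning on a rank $t\in[T]$ among good coordinates cannot be implemented, because $\mathcal B$ does not know $I$.

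The missing idea, used in Lemma~\ref{lem:security-coordinate-reduction}, is to generate both sides of a hybrid step from the same randomness inside $\mathcal B$. Draw $j\leftarrow U(\{0,\dots,n-1\})$ and a uniform $j$-subset $J\subseteq[n]\setminus\{i\}$, flip a fair coin, and output either $\mathcal D(\mathsf{aux},R_J(X))$ or $1-\mathcal D(\mathsf{aux},R_{J\cup\{i\}}(X))$.

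For bad $i$ the two inputs are identically distributed, so $\mathcal B$ accepts with probability exactly $\frac12$. For good $i$ it accepts with probability $\frac12+\frac g2$, where $g$ is the signed before/after gap. Because $\Pr[i\in I]=T/n$ for every setup, averaging the hybrid steps gives $p_0-p_n=Tg$. Hence $\alpha_{\mathcal S,L}(\mathcal D)=T|g|=2T\,\delta_{\mathcal S,L}(\mathcal B)$, with one call to $\mathcal D$.

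The same coin repairs your permutation version: there the good-case before/after gap is $\frac1T(\tilde p_0-\tilde p_T)$ and the bad-case gap is $0$. So the factor $2$ comes exactly from this coin, not from the $|\Pr-\frac12|$ convention. Stage~1 loses nothing, and $\delta_{\mathcal S,L}$ is already a difference of acceptance probabilities.
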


Section~\ref{subsec:security-reduction} proves the theorem by converting an encryption adversary into a sample distinguisher and then into a coordinate-label distinguisher. This is a conditional guarantee for the chosen setup distribution and fresh-sample budget, not a concrete secure parameter choice.

\subsection{A Restriction on the Security Assumption}
\label{subsec:results-restriction}

The sufficient hardness assumption must account for public products of the observed ciphertexts. The following result identifies parameter regimes in which a rank test distinguishes zero encryptions from uniform vectors without knowing the secret code spaces or error positions.

\begin{proposition}[A restriction from products of zero encryptions]
\label{prop:security-product-rank}
Put $M=n-k+1$. If
\begin{equation}
    \binom{L+\Delta}{\Delta}-1\geq M,
    \label{eq:security-monomial-count}
\end{equation}
there is an efficient sample distinguisher, using only $\mathsf{pp}$ and $X$, with
\begin{equation}
    \alpha_{\mathcal S,L}(\mathcal D)
       \geq \max\left\{0,1-\frac{M\Delta}{q}\right\}.
    \label{eq:security-product-rank-bound}
\end{equation}
Consequently, the IID-DSCP assumption cannot hold for a parameter sequence on which this lower bound is nonnegligible.
\end{proposition}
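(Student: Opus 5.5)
The plan is to exhibit an explicit rank test: the products of zero encryptions of total degree at most $\Delta$ are confined to a subspace of dimension at most $n-k=M-1$, whereas the same products of uniform vectors are linearly independent with high probability. Enumerate the nonconstant monomials in $L$ variables of total degree at most $\Delta$. There are $\binom{L+\Delta}{\Delta}-1$ of them, so by \eqref{eq:security-monomial-count} we can fix $M$ distinct ones, $\mu_1,\ldots,\mu_M$, in a public order. Given $(\mathsf{pp},Y)$ with $Y=(\boldsymbol y_1,\ldots,\boldsymbol y_L)$, the distinguisher $\mathcal D$ forms the vectors $\mu_t(Y)$ by coordinatewise products. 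It stacks them as the columns of an $n\times M$ matrix and outputs $1$ exactly when this matrix has rank less than $M$. This uses only $\mathsf{pp}$ and $Y$, and it runs in polynomial time because $M\leq n$, each monomial has degree at most $\Delta$, and rank is computed by Gaussian elimination.

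\emph{Real samples always have rank deficiency.} Take a monomial $\mu$ of degree $d\leq\Delta$ evaluated at $X$. We expand it using \eqref{eq:prelim-product-errors} and induction on $d$. By the product inclusion \eqref{eq:prelim-product-inclusion}, the clean part lies in $\mathcal C^{(d)}$. By \eqref{eq:prelim-message-product}, its message is a coordinatewise product of zero messages, hence $\boldsymbol 0$, so the clean part lies in $K_d$. All remaining terms lie in $E_B$. Since $K_d\subseteq K_\Delta$, this gives $\mu(X)\in K_\Delta+E_B$.

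Next we show $C_1\cap(K_\Delta+E_B)=\{\boldsymbol 0\}$. Suppose $\boldsymbol c_1=\boldsymbol z+\boldsymbol e$ with $\boldsymbol c_1\in\mathcal C_1$, $\boldsymbol z\in K_\Delta$ and $\boldsymbol e\in E_B$. Then $\boldsymbol c_1-\boldsymbol z\in\mathcal C^{(\Delta)}\cap E_B\subseteq K_\Delta$ by \eqref{eq:prelim-level-recovery}. Hence $\boldsymbol c_1\in\mathcal C_1\cap K_\Delta=\{\boldsymbol 0\}$, using $\mathcal C^{(\Delta)}=\mathcal C_1\oplus K_\Delta$. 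Consequently $\dim(K_\Delta+E_B)\leq n-k=M-1$. The $M$ columns are therefore linearly dependent, and $\Pr[\mathcal D(\mathsf{aux},X)=1]=1$.

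\emph{Uniform samples are independent with high probability.} Choose any $M$ rows $r_1,\ldots,r_M$ of the matrix; this is possible since $M\leq n$. Let $P$ be the determinant of the corresponding $M\times M$ submatrix. View $P$ as a formal polynomial in the independent variables $u_{j,r}$, where $u_{j,r}$ is coordinate $r$ of $\boldsymbol u_j$. Row $r_a$ involves only the variables $u_{1,r_a},\ldots,u_{L,r_a}$, and these variable sets are disjoint across rows. In the Leibniz expansion, a permutation $\sigma$ contributes the monomial $\prod_a\mu_{\sigma(a)}(u_{\cdot,r_a})$. Distinct permutations give distinct monomials because the $\mu_t$ are distinct, so no cancellation occurs and $P\neq0$ formally. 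Each entry has degree at most $\Delta$, so $\deg P\leq M\Delta$. The Schwartz--Zippel lemma, applied to uniform field entries, then gives $\Pr[P(U)=0]\leq M\Delta/q$. Hence $\Pr[\mathcal D(\mathsf{aux},U)=1]\leq M\Delta/q$, and \eqref{eq:security-product-rank-bound} follows; the maximum with $0$ is trivial.

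The step I expect to need the most care is the \emph{consequence}: turning this sample distinguisher into an IID-DSCP distinguisher. I would invoke the intermediate step of the reduction behind Theorem~\ref{thm:security-cpa-reduction}, proved in Section~\ref{subsec:security-reduction}. That step is a coordinate-by-coordinate hybrid over the $T$ good positions, and it gives $\alpha_{\mathcal S,L}(\mathcal D)\leq T\,\delta_{\mathcal S,L}(\mathcal B)$ for an efficient $\mathcal B$. Since $T\leq n$ is polynomially bounded, a nonnegligible $\alpha$ forces a nonnegligible $\delta$, contradicting Assumption~\ref{ass:security-iid-dscp}. If that hybrid is stated only in the CPA form, I would re-run it directly with $\mathcal D$ in place of the CPA adversary.
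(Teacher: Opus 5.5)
Your proposal is correct and follows the paper's proof essentially step for step. The shared steps are the bound $\dim(K_\Delta+E_B)\le n-k$ obtained from $\mathcal C_1\cap(K_\Delta+E_B)=\{\boldsymbol 0\}$, the rank test on $M$ distinct nonconstant monomials, and the Leibniz-expansion non-cancellation argument followed by Schwartz--Zippel to get $\Pr[P=0]\le M\Delta/q$ for uniform inputs. One small correction in the consequence: Lemma~\ref{lem:security-coordinate-reduction} gives $\alpha_{\mathcal S,L}(\mathcal D)=2T\,\delta_{\mathcal S,L}(\mathcal B)$, not $\alpha\le T\delta$. That lemma is a hybrid over all $n$ coordinates in which bad coordinates contribute zero gap; the factor $2T$ is still polynomial, so your conclusion is unaffected.
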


The proof and an explicit numerical example are given in Section~\ref{subsec:proof-product-rank}. For instance, when $n=10$, $k=2$, $\Delta=2$, $L=3$, and $q=101$, the stated test has sample-distinguishing advantage at least $83/101$. This contradicts the sufficient assumption in the corresponding regimes; it does not by itself give an IND-CPA attack. Parameters for which this bound is inconclusive still require an independent assessment of the assumption.

\subsection{Compatible Code Families}
\label{sec:code-families}

We next establish that the algebraic requirements are realizable. Extending the homomorphic error-correcting constructions of Yang, Gandikota, and Jones~\cite{YGJHomECC}, the next two theorems give binary Reed--Muller and Reed--Solomon-based towers. Their parameter statements are given here; the explicit polynomial spaces, message maps, and proofs appear in Sections~\ref{sec:rm-tower} and~\ref{sec:average-free-tower}.

\subsubsection{Binary Reed--Muller codes}
Binary evaluation permits message and mask terms to be separated by the supports of squarefree monomials, without an average-free exponent set. Write $\operatorname{RM}_2(s,m)$ for the binary code obtained by evaluating all multilinear polynomials of total degree at most $s$ at the points of $\mathbb F_2^m$.

\begin{theorem}[A compatible binary Reed--Muller tower]
\label{thm:rm-tower}
Let $m,r,\Delta$ be positive integers with $2\Delta r<m$, and set $n=2^m$ and $k=\binom mr$. For any good-position set $I\subseteq[n]$ satisfying
\begin{equation}
    1\leq b=n-|I|<2^{m-2\Delta r},
    \label{eq:rm-recovery-budget}
\end{equation}
there exists a compatible binary code family as in Definition~\ref{def:compatible-family}, with $\mathcal C^{(\ell)}\subseteq\operatorname{RM}_2(2\ell r,m)$ and $k$ message bits at every level. For $1\leq\ell\leq\Delta$, its dimensions and minimum distances satisfy
\begin{equation}
    \rho_\ell=\dim\mathcal C^{(\ell)}
       =\sum_{j=r}^{2\ell r}\binom mj,\qquad
    h_\ell=\rho_\ell-k,\qquad
    d_\ell=d(\mathcal C^{(\ell)})=2^{m-2\ell r}.
    \label{eq:rm-parameters}
\end{equation}
Here $h_\ell=\dim K_\ell$. The minimum distance between distinct message cosets, $\min_{\boldsymbol c\in\mathcal C^{(\ell)}\setminus K_\ell}\operatorname{wt}(\boldsymbol c)$, also equals $d_\ell$.
\end{theorem}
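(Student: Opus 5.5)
The construction takes message coordinates to be the coefficients of the degree-$r$ squarefree monomials, and mask spaces to be spanned by the higher-degree squarefree monomials. For $S\subseteq[m]$ write $x_S=\prod_{j\in S}x_j$. Since $x_j^2=x_j$ on $\mathbb F_2$, the product of two such monomials is $x_Sx_{S'}=x_{S\cup S'}$.

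For $1\le\ell\le\Delta$, let $\mathcal V_\ell=\operatorname{span}\{x_S: r\le|S|\le 2\ell r\}$ and $\mathcal C^{(\ell)}=\operatorname{ev}_{\mathbb F_2^m}(\mathcal V_\ell)$. Let $\mathcal C_1$ be spanned by the evaluations of the $x_S$ with $|S|=r$, ordered in a fixed way. Set $\pi_\ell(\operatorname{ev}(f))=(f_S)_{|S|=r}$, the vector of degree-$r$ coefficients, and take $K_\ell$ to be the span of the evaluations of the $x_S$ with $r<|S|\le2\ell r$.

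Evaluation on all of $\mathbb F_2^m$ is injective on multilinear polynomials. Hence each $\pi_\ell$ is well defined, linear and surjective. The dimension formula $\rho_\ell=\sum_{j=r}^{2\ell r}\binom mj$ and $h_\ell=\rho_\ell-k$ follow by counting monomials, and $\mathcal C^{(\ell)}\subseteq\operatorname{RM}_2(2\ell r,m)$ is clear. The nesting, and the compatibility condition \eqref{eq:prelim-level-compatibility}, hold because $\pi_s$ and $\pi_r$ read the same coefficients.

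For the product conditions, let $f\in\mathcal V_r$ and $g\in\mathcal V_s$ with $r+s\le\Delta$ (here $r,s$ are level indices). Then $fg$ is a combination of the $x_{A\cup B}$ with $|A\cup B|\le 2rr_0+2sr_0$, writing $r_0$ for the degree parameter. This bound gives \eqref{eq:prelim-product-inclusion}, via \eqref{eq:prelim-evaluation-product}.

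The key observation concerns a fixed $S$ with $|S|=r_0$. The coefficient of $x_S$ in $fg$ is $\sum_{A\cup B=S}f_Ag_B$. Since $|A|,|B|\ge r_0=|S|$ and $A,B\subseteq S$, the only surviving term is $A=B=S$. So this coefficient is $f_Sg_S$, which is exactly \eqref{eq:prelim-message-product} for arbitrary masked representatives.

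For recovery \eqref{eq:prelim-level-recovery}, every nonzero word of $\mathcal C^{(\ell)}$ lies in $\operatorname{RM}_2(2\ell r,m)$. Its weight is therefore at least $2^{m-2\ell r}\ge2^{m-2\Delta r}>b$. Hence $\mathcal C^{(\ell)}\cap E_B=\{\boldsymbol 0\}\subseteq K_\ell$, using \eqref{eq:rm-recovery-budget}.

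It remains to show the distance claims are attained. The lower bound $d_\ell\ge2^{m-2\ell r}$ comes from the Reed--Muller distance. For the upper bound, $x_S$ with $|S|=2\ell r\le m$ has weight $2^{m-2\ell r}$. The coset statement is the main point needing care, since we need a minimum-weight word with nonzero message part. I would take disjoint sets $S,T$ with $|S|=r$ and $|T|=2\ell r-r$, which exist since $2\ell r<m$, and set
\[
f=x_S\prod_{j\in T}(1+x_j)=\sum_{U\subseteq T}x_{S\cup U}.
\]
All of its monomials have degree between $r$ and $2\ell r$, so $f\in\mathcal V_\ell$. Its $x_S$-coefficient is $1$, so $\operatorname{ev}(f)\notin K_\ell$. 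It is nonzero exactly where $x_S=1$ and $x_T=0$, which has weight $2^{m-r-(2\ell r-r)}=2^{m-2\ell r}$.

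Combined with the lower bound, which applies to every nonzero codeword and in particular to those outside $K_\ell$, this gives both distance statements in \eqref{eq:rm-parameters}.
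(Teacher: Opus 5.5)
Your proposal is correct and follows the paper's proof essentially step for step: the same graded tower of squarefree monomials with degree-$r$ coefficients as the message and higher degrees as the mask, the same support-union argument forcing $A=B=S$ in the message-product identity, the same witness $x_S\prod_{j\in T}(1+x_j)$ for the coset distance, and the same deduction of recovery from $b<2^{m-2\Delta r}\le d_\ell$. The only differences are minor. You cite two standard facts that the paper proves: injectivity of evaluation on multilinear polynomials (via a triangular evaluation matrix) and the Reed--Muller weight bound $2^{m-s}$ (via induction on $m$). You also leave implicit the trivial lower bound $|A\cup B|\ge|A|\ge r$ (your $r_0$) needed for the product to land in $\mathcal C^{(r+s)}$.
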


The construction and proof are in Section~\ref{sec:rm-tower}. Its exact distance displays the tradeoff between the number of packed message bits and the supported computation degree.

\subsubsection{Reed--Solomon subcodes}
Univariate evaluation gives a second construction, where an average-free set prevents mixed products from contributing to the intended message coordinates. The required condition is Definition~\ref{def:average-free-exponents}.

\begin{theorem}[A compatible tower from average-free exponents]
\label{thm:average-free-tower}
Let $\Delta,N$ be positive integers, put $D=2\Delta$, and choose a nonempty $D$-average-free set $S\subseteq[N]$ of size $k$. Let $q$ be a prime power and choose integers $n,T$ satisfying
\begin{equation}
    D^2N<T<n\leq q-1.
    \label{eq:tower-evaluation-budget}
\end{equation}
For any $n$ distinct nonzero evaluation points in $\mathbb F_q$ and any good-position set of size $T$, there exists a compatible family of Reed--Solomon subcodes as in Definition~\ref{def:compatible-family}, carrying $k$ message symbols at every level. Writing $jS$ for the sumset defined in the preliminaries, the code and mask dimensions are
\[
    r_\ell=\dim\mathcal C^{(\ell)}
        =\sum_{j=1}^{2\ell}|jS|,
    \qquad h_\ell=\dim K_\ell=r_\ell-k,
    \qquad 1\leq\ell\leq\Delta.
\]
Their minimum distances satisfy
\begin{equation}
    d(\mathcal C^{(\ell)})\geq n-2\ell DN,
    \qquad d(\mathcal C^{(\Delta)})\geq n-D^2N.
    \label{eq:tower-distance-bounds}
\end{equation}
The distance between distinct message cosets is at least the corresponding bound in~\eqref{eq:tower-distance-bounds}.
\end{theorem}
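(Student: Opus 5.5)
We realize the tower by Reed--Solomon subcodes whose monomials are indexed by pairs $(j,s)$, where $j$ is a ``formal degree'' and $s\in jS$ is an exponent sum. Fix the shift $c=(D-1)N$ and the exponent map $\phi(j,s)=s+jc$. This map is additive: $\phi(j,s)+\phi(j',s')=\phi(j+j',s+s')$. Define
\[
\mathcal V_\ell=\operatorname{span}\{x^{\phi(j,s)}:1\le j\le 2\ell,\ s\in jS\},\qquad
\mathcal C^{(\ell)}=\operatorname{ev}_X(\mathcal V_\ell).
\]
Since $jS\subseteq[j,jN]$, we have $\phi(j,s)\in[j(c+1),j(c+N)]=[j((D-1)N+1),\,jDN]$. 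For $j<j'\le D$ these intervals are disjoint, because $jDN<j'((D-1)N+1)$ whenever $j'>j$ and $j'\le D$. Hence all the exponents are distinct. The largest exponent at level $\ell$ is $2\ell DN\le D^2N<T<n\le q-1$, and polynomials of degree below $n$ are determined by their values at $n$ distinct points. So evaluation is injective on $\mathcal V_\ell$, giving $\dim\mathcal C^{(\ell)}=\sum_{j=1}^{2\ell}|jS|$. The nesting $\mathcal C^{(\ell)}\subseteq\mathcal C^{(\ell+1)}$ is immediate. The product inclusion~\eqref{eq:prelim-product-inclusion} follows from additivity of $\phi$ together with $jS+j'S=(j+j')S$ and the identity $\mathcal C_X(\mathcal V)\star\mathcal C_X(\mathcal W)=\mathcal C_X(\mathcal V\mathcal W)$.

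For the message maps, write $f_{j,s}$ for the coefficient of $x^{\phi(j,s)}$ in the unique polynomial $f\in\mathcal V_\ell$ evaluating to $\boldsymbol c$. Define
\[
\pi_\ell(\boldsymbol c)_a=\sum_{j=1}^{2\ell}f_{j,ja}\qquad(a\in S).
\]
This is the ``diagonal'' coefficient sum, i.e.\ the generating polynomial $\sum_j f_{j,ja}t^j$ evaluated at $t=1$. It is linear, and it is surjective because $\pi_\ell(\operatorname{ev}_X(\sum_a m_ax^{\phi(1,a)}))=\boldsymbol m$; we take $\mathcal C_1$ to be the span of these level-one monomials. Level compatibility~\eqref{eq:prelim-level-compatibility} holds because the coefficients of a lower-level codeword are unchanged when it is viewed at a higher level, and the extra terms are zero.

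The key step, and the one I expect to be the main obstacle, is the product identity~\eqref{eq:prelim-message-product}. Take $f\in\mathcal V_r$ and $g\in\mathcal V_s$. The coefficient of $x^{\phi(j,ja)}$ in $fg$ is a sum over pairs with $\phi(j_1,s_1)+\phi(j_2,s_2)=\phi(j,ja)$.

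First, by the interval disjointness above, which remains valid because $j=j_1+j_2\le 2(r+s)\le D$, such a pair must satisfy $j_1+j_2=j$ and $s_1+s_2=ja$.

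Second, write $s_1$ as a sum of $j_1$ elements of $S$ and $s_2$ as a sum of $j_2$ elements of $S$. Together these give $j\le D$ elements of $S$ summing to $ja$. The $D$-average-free property then forces every summand to equal $a$, so $s_1=j_1a$ and $s_2=j_2a$.

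Hence $(fg)_{j,ja}=\sum_{j_1+j_2=j}f_{j_1,j_1a}\,g_{j_2,j_2a}$. Summing over $j$ gives $\pi_{r+s}(\boldsymbol c\odot\boldsymbol c')_a=\pi_r(\boldsymbol c)_a\,\pi_s(\boldsymbol c')_a$, which is~\eqref{eq:prelim-message-product}. Care is needed to check that $jS$ is being used in the sense of multisets of summands and that every level bound stays within $D$; this is exactly why the hypothesis sets $D=2\Delta$ rather than $\Delta$.

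Finally, recovery and distance. Every nonzero codeword in $\mathcal C^{(\ell)}$ comes from a polynomial of degree at most $2\ell DN$, so it has at most $2\ell DN$ zeros among the nonzero evaluation points. This gives $d(\mathcal C^{(\ell)})\ge n-2\ell DN\ge n-D^2N$, which is~\eqref{eq:tower-distance-bounds}. Because $T>D^2N\ge 2\ell DN$, a codeword vanishing on $I$ is zero. Thus $\mathcal C^{(\ell)}\cap E_B=\{\boldsymbol 0\}\subseteq K_\ell$, which is~\eqref{eq:prelim-level-recovery}. The coset-distance claim follows since every nonzero codeword, in particular every element of $\mathcal C^{(\ell)}\setminus K_\ell$, meets the same bound. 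The dimension of $K_\ell$ is $r_\ell-k$ by surjectivity of $\pi_\ell$.
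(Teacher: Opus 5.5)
Your proposal is correct and follows essentially the same route as the paper. Your shifted exponents $\phi(j,s)=s+j(D-1)N$ with $s\in jS$ are exactly the paper's sumsets $jA$ with $A=(D-1)N+S$, and your diagonal message map is the paper's $\pi_\ell$. You also use the same three arguments: interval disjointness to fix the sum length, average-freeness to force pure products, and root counting for the distance and recovery claims. Your explicit pair-indexed expansion of the coefficient $(fg)_{j,ja}$ simply spells out a step that the paper states more briefly.
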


The construction and proof are in Section~\ref{sec:average-free-tower}. Here the distance is a lower bound, and the attainable message dimension depends on the available average-free sets. Neither theorem establishes the IID-DSCP assumption for its family; the later examples also discuss concrete attacks.

\subsection{Message Packing and Comparison}
\label{sec:packing-comparison}

We first compare the two code towers, then quantify the representation benefit relative to the scalar construction of Armknecht et al.~\cite{AAPS11}, and finally compare ciphertext and key storage under explicit counting conventions. Yang, Gandikota, and Jones~\cite[Sec.~IV, Def.~5 and Lemma~6]{YGJHomECC} describe messages through a quotient $\mathcal C_2/K$ with a complementary space $\mathcal C_1$. Its dimension is $k=\dim\mathcal C_1$, so one coset represents a vector of $k$ field symbols. Theorems~\ref{thm:rm-tower} and~\ref{thm:average-free-tower} give two extensions supporting public computation on randomly masked representatives. Their algebraic conditions and achievable parameters differ, as summarized in Table~\ref{tab:code-family-comparison}.

\begin{table}[tbp]
    \centering
    \caption{Comparison of the two sufficient code-tower constructions, extending the RM and RS coset constructions in~\cite{YGJHomECC}. Distance refers to the highest supported level.}
    \label{tab:code-family-comparison}
    \small
    \renewcommand{\arraystretch}{1.3}
    \begin{tabular*}{\textwidth}{@{\extracolsep{\fill}}lll@{}}
        \hline
        Property & Binary RM tower & RS-based tower \\
        \hline
        Field & $\mathbb F_2$ & $\mathbb F_q$, $n\leq q-1$ \\
        Evaluation domain & $\mathbb F_2^m$ & $n$ distinct nonzero field points \\
        Polynomial description & Multivariate, squarefree & Univariate, selected exponents \\
        Message dimension $k$ & $\binom mr$ & $|S|\leq a_{2\Delta}(N)$ \\
        Length & $n=2^m$ & $4\Delta^2N+2\leq n\leq q-1$ \\
        Terminal distance & $d_\Delta=2^{m-2\Delta r}$ & $d_\Delta\geq n-4\Delta^2N$ \\
        Message-product separation & Unions of supports & Average-free sums of exponents \\
        AP-free set required here & No & $2\Delta$-average-free set \\
        Ciphertext-vector bit storage & $n$ & $n\lceil\log_2q\rceil$ \\
        Confidentiality established & No & No \\
        \hline
    \end{tabular*}
\end{table}

The RM construction has a direct support-based proof and uses binary arithmetic. The RS-based construction allows a choice of evaluation length within its field and degree budgets, but the message dimension depends on the available average-free sets. Neither construction is uniformly better: the desired field, length, message dimension, and computation degree determine the relevant tradeoff. In particular, an RM message contains $k$ bits, whereas an RS-based message contains $k$ symbols of $\mathbb F_q$, with information content $k\log_2q$ bits. Comparing only the number of symbols or only $n$ does not match the bit payload or security level. The following packing counts apply to any family satisfying Definition~\ref{def:compatible-family}, at a common field and ciphertext length.

Table~\ref{tab:packing-comparison} considers one record containing $k$ field symbols. In the scalar baseline, each symbol is encrypted separately into a length-$n$ vector. A compatible packed instance carries the entire record in one length-$n$ vector over the same field. The comparison counts ciphertext field elements and field operations, excluding level labels and other metadata. It assumes that both representations support the operation being compared; it is not a comparison at matched security levels.

\begin{table}[tbp]
    \centering
    \caption{Representation and operation counts for one $k$-symbol record at the same field and ciphertext length $n$. The scalar baseline is from~\cite{AAPS11}; the packed representation uses the coset framework of~\cite{YGJHomECC} subject to our compatibility conditions.}
    \label{tab:packing-comparison}
    \small
    \renewcommand{\arraystretch}{1.2}
    \begin{tabular*}{\textwidth}{@{\extracolsep{\fill}}lcc@{}}
        \hline
        Quantity & Scalar construction & Packed construction \\
        \hline
        Message symbols per ciphertext & $1$ & $k$ \\
        Ciphertext vectors per record & $k$ & $1$ \\
        Transmitted field elements per record & $kn$ & $n$ \\
        Message symbols per transmitted field element & $1/n$ & $k/n$ \\
        Fresh encryption calls per record & $k$ & $1$ \\
        Public field operations per componentwise gate & $O(kn)$ & $O(n)$ \\
        Message recovery per record & $k$ scalar recoveries & One vector recovery \\
        \hline
    \end{tabular*}
\end{table}

The main saving is in the number of stored and transmitted ciphertext vectors. For example, if a compatible instance packs four message symbols at the same ciphertext length, a four-symbol record requires $n$ transmitted field elements instead of $4n$, a reduction of $75\%$ in the ciphertext-vector payload. Similarly, a supported componentwise addition or multiplication of two packed records uses one operation on length-$n$ vectors, whereas separate scalar encryptions require four such operations. These counts concern operations on corresponding message coordinates; they do not provide arbitrary operations between different coordinates of a packed record.

The receiver also processes one ciphertext vector for the whole record, sharing access to its good coordinates and any consistency check across the message symbols. This does not imply a factor-$k$ reduction in decryption arithmetic. For example, applying a precomputed dense linear decoder to recover $k$ symbols from $T$ good coordinates costs $O(kT)$ field operations, the same order as applying $k$ scalar linear decoders of cost $O(T)$ each. A stronger runtime claim would require a particular code family and its decoding algorithm.

Packing reduces the number of fresh encryptions used to transmit a record, but its security effect must be analyzed together with the available masking space. If $\dim\mathcal C_2$ is fixed, increasing $k$ reduces $h=\dim K=\dim\mathcal C_2-k$ and hence the number of clean random representatives available for each message. Moreover, Proposition~\ref{prop:security-product-rank} shows that the restriction from products of zero encryptions depends on $k$ as well as on the sample budget and supported degree. Thus the table establishes a representation benefit; a security comparison requires concrete choices of these parameters and an analysis of the resulting setup distribution.

\paragraph{Degree, ciphertext size, and key storage.}
Table~2 of~\cite{AAPS11} reports the supported degree $\mu$, the ciphertext-vector length $n$, the field-size quantity $\log_2q$, and ciphertext and key sizes for selected parameters. Their $\mu$ corresponds to our formal-degree bound $\Delta$. For example, two successive squarings have formal degree four, so this bound does not count multiplication gates or circuit depth. To compare storage, set $\Delta=\mu$ and write $b_q=\lceil\log_2q\rceil$ for the bits used to store one field element. Table~\ref{tab:storage-comparison} uses a common field and vector length to isolate representation costs. It does not assert that these choices yield secure instances of both schemes.

\begin{table}[tbp]
    \centering
    \caption{Parameter and direct-storage comparison at common $q,n$ and degree bound $\Delta=\mu$. Storage quantities are in bits. Ciphertext-level labels, public metadata, and cached decoders are excluded. This is not a comparison at matched security levels.}
    \label{tab:storage-comparison}
    \small
    \renewcommand{\arraystretch}{1.25}
    \begin{tabular*}{\textwidth}{@{\extracolsep{\fill}}p{0.40\textwidth}cc@{}}
        \hline
        Quantity & construction in~\cite{AAPS11} & Our construction \\
        \hline
        Supported formal degree & $\mu$ & $\Delta=\mu$ \\
        Message symbols per ciphertext & $1$ & $k$ \\
        Bits per field element & $b_q$ & $b_q$ \\
        Ciphertext-vector size & $nb_q$ & $nb_q$ \\
        Ciphertext-vector storage for a $k$-symbol record & $knb_q$ & $nb_q$ \\
        Good-position bitmap $I$ & $n$ & $n$ \\
        Full key, using the representations described below & $t(n+1)b_q+n$ & $n(k+h_\Delta)b_q+n$ \\
        \hline
    \end{tabular*}
\end{table}

The key-size row in Table~2 of~\cite{AAPS11} reports $n$ bits, counting the indicator of the good-position set $I$. Their formal key in Section~4.1 also contains the codeword support $\boldsymbol x$ and the message support $\boldsymbol y$. Appendix~B.1 omits these supports from its parameter discussion while explicitly retaining their secrecy. Thus its reported $n$-bit quantity does not count storage for the complete formal key. For their RM construction with $t$ polynomial variables, $\boldsymbol x$ consists of $n$ ordered points in $\mathbb F_q^t$ and $\boldsymbol y$ is one point in $\mathbb F_q^t$. Storing these points explicitly uses $t(n+1)b_q$ bits, giving the full direct-storage count in Table~\ref{tab:storage-comparison} after adding the $n$-bit bitmap. This count is derived from the stated key representation, rather than quoted as their Table~2 value.

Our key in~\eqref{eq:scheme-secret-key} stores $G_1\in\mathbb F_q^{n\times k}$, an adapted basis $G_{K,\Delta}\in\mathbb F_q^{n\times h_\Delta}$, and the same type of bitmap for $I$. The two matrices therefore use $n(k+h_\Delta)b_q$ bits. The adapted basis determines the smaller mask spaces through its column prefixes, so no separate matrix for each level is counted. Both full-key formulas describe explicit storage choices, not minimum key lengths; a compact description of a structured family could change the cost. Neither formula includes optional precomputed decryption matrices. A shared key can be reused within its allowed sample budget, so encrypting a $k$-symbol record as $k$ scalar ciphertexts does not multiply the key size by $k$.

The ciphertext-vector counts follow the convention of the earlier table by omitting the public degree label. Including it adds $\lceil\log_2\mu\rceil$ bits per scalar ciphertext and $\lceil\log_2\Delta\rceil$ bits per packed ciphertext. Thus a $k$-symbol record uses $k(nb_q+\lceil\log_2\mu\rceil)$ bits in the scalar representation and $nb_q+\lceil\log_2\Delta\rceil$ bits in the packed representation, under these direct encodings and excluding transmission overhead.

These comparisons establish storage costs and the payload benefit of packing; they do not establish a smaller secret key or a security advantage. In particular, we have not selected parameters giving our concrete code families the same security level as any column of Table~2 in~\cite{AAPS11}. Such a comparison requires a specified setup distribution, fresh-sample budget, supported degree, and attack analysis for each family. The discussion following their Table~2 also distinguishes its attack-based parameter choices from the larger fields needed to make the reduction's statistical-error term at most $2^{-s}$. Consequently, neither correctness alone nor matching $n,q,\mu$ supplies a matched-security comparison.

\section{Proofs}
\label{sec:proofs}

We now prove the results stated in Section~\ref{sec:main-results}. We first establish recovery and correctness, then prove the security reduction and the product-rank restriction. Finally, we give the explicit RM and RS constructions and verify their parameter and compatibility claims. The construction examples follow their respective proofs.

\subsection{Recovery and Correctness}
\label{subsec:scheme-correctness}

We begin with the recovery criterion from the preliminaries, which is the basis of the decryption argument.

\begin{proof}[Proof of Lemma~\ref{lem:prelim-recovery}]
Under the inclusion, any two compatible clean codewords differ by an element of $\mathcal C\cap E_B\subseteq\mathcal K$ and therefore have the same message. The original codeword is compatible, proving existence. Conversely, if $\boldsymbol v\in(\mathcal C\cap E_B)\setminus\mathcal K$, the identical received vector $\boldsymbol v=\boldsymbol 0+\boldsymbol v=\boldsymbol v+\boldsymbol 0$ can encode either message $\boldsymbol 0$ or $\varphi(\boldsymbol v)\ne\boldsymbol 0$, making recovery impossible.
\end{proof}

\begin{proof}[Proof of Proposition~\ref{prop:scheme-correctness}]
A fresh ciphertext satisfies~\eqref{eq:scheme-valid-ciphertext} at level $1$. More generally, any valid level-$\ell$ vector can be written as $\boldsymbol w=\boldsymbol c+\boldsymbol e$, where $\boldsymbol c\in\mathcal C^{(\ell)}$, $\pi_\ell(\boldsymbol c)=\boldsymbol m$, and $\boldsymbol e\in E_B$. Restriction to $I$ removes the error, so $\boldsymbol c$ agrees with the received good coordinates. Any other compatible codeword $\widetilde{\boldsymbol c}\in\mathcal C^{(\ell)}$ satisfies $\widetilde{\boldsymbol c}-\boldsymbol c\in\mathcal C^{(\ell)}\cap E_B\subseteq K_\ell$ by~\eqref{eq:prelim-level-recovery}. Since $K_\ell=\ker\pi_\ell$, both codewords belong to the same message coset and satisfy $\pi_\ell(\widetilde{\boldsymbol c})=\pi_\ell(\boldsymbol c)=\boldsymbol m$. Thus decryption recovers the unique message, as in Lemma~\ref{lem:prelim-recovery}.

Consider valid inputs $\boldsymbol w=\boldsymbol c+\boldsymbol e$ and $\boldsymbol w'=\boldsymbol c'+\boldsymbol e'$ at levels $r,s$. For addition, both clean codewords belong to $\mathcal C^{(t)}$ with $t=\max\{r,s\}$. Linearity and consistency of the message maps give
\[
    \pi_t(\boldsymbol c+\boldsymbol c')=\pi_r(\boldsymbol c)+\pi_s(\boldsymbol c')=\boldsymbol m+\boldsymbol m'.
\]
Since $\boldsymbol e+\boldsymbol e'\in E_B$, the returned sum is valid at level $t$.

For multiplication with $r+s\leq\Delta$, the clean product lies in $\mathcal C^{(r+s)}$ by~\eqref{eq:prelim-product-inclusion} and has message $\boldsymbol m\odot\boldsymbol m'$ by~\eqref{eq:prelim-message-product}. Equation~\eqref{eq:prelim-product-errors} shows that the remaining error lies in $E_B$. Thus, the returned product is valid at level $r+s$. Induction over the circuit gates completes the proof.
\end{proof}

\subsection{Security Proofs}
\label{sec:security}
\label{subsec:security-reduction}

We prove Theorem~\ref{thm:security-cpa-reduction} by two reductions, using the experiments of Section~\ref{sec:security-model}. The first converts a chosen-plaintext adversary into a test distinguishing zero encryptions from uniform samples. The second converts that test into an IID-DSCP distinguisher. This follows the sample-pseudorandomness approach of~\cite[Sec.~5, Thms.~2--3]{AAPS11}, with proofs for the sampling distribution and auxiliary information used here.

\subsubsection{Reduction to the coordinate-label problem}

To link this comparison to cryptographic security, we require a transformation that can both convert "zero" samples into ciphertexts of the target message and preserve uniformity when the input is uniform. The following proposition explicitly details these two properties. This transformation plays a corresponding role in Proposition 1 of \cite{AAPS11}; in the present case, however, since the structured input already represents "zero," only an additive shift is required.

\begin{proposition}[From zero samples to message encryptions]
\label{prop:security-zero-transformation}
Fix a valid setup and a message $\boldsymbol m\in\mathbb F_q^k$. Define
\begin{equation}
    \tau_{G_1}(\boldsymbol v,\boldsymbol m)
       =G_1\boldsymbol m+\boldsymbol v,
    \qquad \boldsymbol v\in\mathbb F_q^n.
    \label{eq:security-zero-transformation}
\end{equation}
This transformation has the following properties:
\begin{enumerate}
    \item If $\boldsymbol v=\boldsymbol z+\boldsymbol e$, with independent $\boldsymbol z\leftarrow U(K)$ and $\boldsymbol e\leftarrow U(E_B)$, then $(\tau_{G_1}(\boldsymbol v,\boldsymbol m),1)$ has exactly the distribution of a fresh encryption of $\boldsymbol m$.
    \item If $\boldsymbol v\leftarrow U(\mathbb F_q^n)$, then $\tau_{G_1}(\boldsymbol v,\boldsymbol m)$ is uniform on $\mathbb F_q^n$.
\end{enumerate}
\end{proposition}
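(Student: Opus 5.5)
Both parts follow from the fact that $\tau_{G_1}(\cdot,\boldsymbol m)$ is a translation of $\mathbb F_q^n$ by the fixed vector $G_1\boldsymbol m$. A translation is a bijection that preserves uniformity and maps one sampling procedure onto another. We fix the setup, and hence $G_1$, $K=K_1$, $G_{K,1}$ and $B$, and argue conditionally on it throughout.

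For property~1, we compare the two random variables sample by sample. In $\mathsf{Encrypt}$, the mask is $G_{K,1}\boldsymbol r$ with $\boldsymbol r\leftarrow U(\mathbb F_q^h)$. Since the columns of $G_{K,1}$ form a basis of $K$, the map $\boldsymbol r\mapsto G_{K,1}\boldsymbol r$ is a linear bijection $\mathbb F_q^h\to K$, so the mask is distributed as $U(K)$. The error $\boldsymbol e\leftarrow U(E_B)$ is independent of $\boldsymbol r$ in both procedures. Hence the pair (mask, error) has the same joint law $U(K)\times U(E_B)$ in $\mathsf{Encrypt}$ and in the hypothesis of property~1. Applying the same deterministic map $(\boldsymbol z,\boldsymbol e)\mapsto G_1\boldsymbol m+\boldsymbol z+\boldsymbol e$ to both pairs, and attaching the constant level label $1$, gives identical output distributions. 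As a cross-check, one can also use the remark after the algorithms in Section~\ref{subsec:scheme-algorithms}: both outputs are uniform on $G_1\boldsymbol m+(K+E_B)$, because the addition map $K\times E_B\to K+E_B$ is a surjective linear map with fibres of constant size.

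For property~2, the map $\boldsymbol v\mapsto G_1\boldsymbol m+\boldsymbol v$ is a bijection of $\mathbb F_q^n$, with inverse subtraction of $G_1\boldsymbol m$. Therefore
\[
\Pr[\tau_{G_1}(\boldsymbol v,\boldsymbol m)=\boldsymbol u]=\Pr[\boldsymbol v=\boldsymbol u-G_1\boldsymbol m]=q^{-n}
\]
for every $\boldsymbol u$.

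No step is a real obstacle. The one point needing care is in property~1: we must check that the encryption algorithm's mask $G_{K,1}\boldsymbol r$ really is uniform on $K$, which uses the fact that $G_{K,1}$ has full column rank $h$, and that independence of mask and error is preserved. We should also note that these statements hold conditionally on each fixed setup, so they remain valid after averaging over $\mathcal S$, as the reduction will need.
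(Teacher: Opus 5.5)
Your proposal is correct and follows the paper's proof: you get property~1 by recognizing $G_1\boldsymbol m+\boldsymbol z+\boldsymbol e$ as the $\mathsf{Encrypt}$ output formula with independently sampled mask and error, and property~2 by the translation-bijection computation giving probability $q^{-n}$. Your explicit check that $\boldsymbol r\mapsto G_{K,1}\boldsymbol r$ is a bijection $\mathbb F_q^h\to K$, so that the algorithm's mask is $U(K)$, fills in a step the paper leaves implicit; the paper also remarks that both arguments survive conditioning on a preceding adaptive interaction, which Lemma~\ref{lem:security-encryption-to-samples} later relies on.
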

\begin{proof}
For the first property, substitution gives
\[
    \tau_{G_1}(\boldsymbol z+\boldsymbol e,\boldsymbol m)
       =G_1\boldsymbol m+\boldsymbol z+\boldsymbol e.
\]
The two random terms have exactly the independent distributions required by $\mathsf{Encrypt}$, proving equality of the full encryption distributions.

For the second property, translation by $G_1\boldsymbol m$ is a bijection of $\mathbb F_q^n$, with inverse $\boldsymbol a\mapsto\boldsymbol a-G_1\boldsymbol m$. Thus, for every $\boldsymbol a\in\mathbb F_q^n$,
\[
    \Pr[\tau_{G_1}(\boldsymbol v,\boldsymbol m)=\boldsymbol a]
       =\Pr[\boldsymbol v=\boldsymbol a-G_1\boldsymbol m]
       =q^{-n}.
\]
This proves uniformity. Both arguments also apply after conditioning on a preceding interaction and the requested message, provided that $\boldsymbol v$ still has the stipulated fresh distribution under that conditioning and the fixed setup.
\end{proof}

The simulator computes $\tau_{G_1}$ using the encoder supplied in $\mathsf{aux}$. This construction uses zero samples from $K+E_B$ and therefore does not require the special full-support encoding of zero used by the earlier paper's transformation. The next lemma uses Proposition~\ref{prop:security-zero-transformation} to turn any successful encryption attack into a test distinguishing $X$ from $U$.

\begin{lemma}[From encryption attacks to sample distinguishing]
\label{lem:security-encryption-to-samples}
For every efficient adversary $\mathcal A$ in the $L$-bounded encryption experiment, including its stronger version revealing $G_1$, there is an efficient sample distinguisher $\mathcal D$ such that
\begin{equation}
    \operatorname{Adv}^{\mathrm{CPA}}_{\Pi,\mathcal A,L}
       =\alpha_{\mathcal S,L}(\mathcal D).
    \label{eq:security-encryption-to-samples}
\end{equation}
Consequently, computational indistinguishability of $(\mathsf{aux},X)$ and $(\mathsf{aux},U)$ is sufficient for $L$-bounded IND-CPA security.
\end{lemma}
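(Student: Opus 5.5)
The plan is a direct simulation argument with $\mathcal D$ built around $\mathcal A$. The distinguisher $\mathcal D$ receives $(\mathsf{aux},V)$, where $\mathsf{aux}=(\mathsf{pp},G_1)$ and $V=(\boldsymbol v_1,\ldots,\boldsymbol v_L)$ is either the zero-encryption tuple $X$ or the uniform tuple $U$. It runs $\mathcal A$ as follows.

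\begin{enumerate}
\item It first samples its own uniform bit $\theta$. It gives $\mathcal A$ the public parameters $\mathsf{pp}$, and also $G_1$ in the stronger experiment.
\item It answers the $j$-th fresh-ciphertext request by consuming the next unused sample $\boldsymbol v_j$. An ordinary query on $\boldsymbol m$ is answered with $(\tau_{G_1}(\boldsymbol v_j,\boldsymbol m),1)$. The challenge query $(\boldsymbol m_0,\boldsymbol m_1)$ is answered with $(\tau_{G_1}(\boldsymbol v_j,\boldsymbol m_\theta),1)$.
\item Since at most $L$ fresh ciphertexts are returned, the samples suffice; unused samples are discarded.
\item Finally, $\mathcal D$ outputs $1$ exactly when $\mathcal A$'s guess satisfies $\widehat\theta=\theta$.
\end{enumerate}

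Efficiency is clear: each answer costs one matrix--vector product and one vector addition. Public evaluations need no simulation, because $\mathcal A$ performs them locally on the vectors it receives.

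\emph{Structured case.} When $V=X$, I would argue that $\mathcal D$ reproduces the $L$-bounded experiment exactly. The setup underlying $X$ is drawn from $\mathcal S$, and $\mathsf{aux}$ is its genuine $(\mathsf{pp},G_1)$. The samples $\boldsymbol v_j=\boldsymbol z_j+\boldsymbol e_j$ are mutually independent given the setup and independent of $\mathcal A$'s coins and of $\theta$. So conditioned on the setup, the preceding interaction and the queried message, each $\boldsymbol v_j$ still has its stipulated fresh distribution. The conditional form of Proposition~\ref{prop:security-zero-transformation}(1) then says every answer is distributed exactly as a fresh encryption of the requested message, including $\boldsymbol m_\theta$ for the challenge. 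An induction over the queries gives that $\mathcal A$'s full view, jointly with $\theta$, matches the real experiment. Hence $\Pr[\mathcal D(\mathsf{aux},X)=1]=\Pr[\widehat\theta=\theta]$.

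\emph{Uniform case.} When $V=U$, the samples are uniform and independent of the setup, of $\theta$, and of $\mathcal A$'s coins. By Proposition~\ref{prop:security-zero-transformation}(2), applied conditionally on the history and the requested message, each answer is a fresh uniform vector of $\mathbb F_q^n$ with label $1$. In particular the challenge vector is uniform whichever of $\boldsymbol m_0,\boldsymbol m_1$ was used. By induction, $\mathcal A$'s entire view is independent of $\theta$, so $\Pr[\mathcal D(\mathsf{aux},U)=1]=1/2$ exactly. Subtracting the two cases gives $\alpha_{\mathcal S,L}(\mathcal D)=|\Pr[\widehat\theta=\theta]-1/2|=\operatorname{Adv}^{\mathrm{CPA}}_{\Pi,\mathcal A,L}$. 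This is an equality, as in \eqref{eq:security-encryption-to-samples}, and the ``consequently'' clause is immediate.

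The step I expect to need the most care is the conditional-independence bookkeeping under adaptivity. Queries depend on earlier answers, so I must check that each $\boldsymbol v_j$ is independent of everything $\mathcal A$ has seen so far given the setup. For $X$ this holds by the mutual independence in~\eqref{eq:security-zero-samples}. For $U$ it is where the ``independent of the setup'' clause is used. I would also check that $\mathsf{aux}$ has the same marginal in both views, which the definition of $\alpha_{\mathcal S,L}$ stipulates, and that in the weaker experiment $\mathcal D$ simply withholds $G_1$ from $\mathcal A$.
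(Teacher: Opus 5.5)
Your proposal is correct and follows the paper's proof essentially step for step. It uses the same simulator, which answers each fresh-ciphertext request (the challenge included, on $\boldsymbol m_\theta$) with $G_1\boldsymbol m+\boldsymbol y_j$ and outputs $1$ iff $\widehat\theta=\theta$; Proposition~\ref{prop:security-zero-transformation}(1) reproduces the real experiment on $X$, and part (2) yields a $\theta$-independent view with success exactly $1/2$ on $U$, while your remark about withholding $G_1$ in the weaker experiment matches how the paper treats that case in the proof of Theorem~\ref{thm:security-cpa-reduction}.
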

\begin{proof}
Given $(\mathsf{aux}, Y)$ (where $Y$ is either $X$ or $U$), we construct a sample distinguisher $\mathcal D$. This distinguisher runs the adversary $\mathcal A$, providing it with the predetermined initial information; it samples a challenge bit $\theta$ internally and uses the next unused sample $\boldsymbol y_j$ to respond to each encryption request for a message $\boldsymbol m$ as follows:
\begin{equation}
(\tau_{G_1}(\boldsymbol y_j,\boldsymbol m),1)
=(G_1\boldsymbol m+\boldsymbol y_j,1).
\label{eq:security-simulated-encryption}
\end{equation}
The challenge phase also processes $\boldsymbol m_\theta$ using the same formula. Each response consumes one sample, requiring at most $L$ samples in total. The distinguisher outputs $1$ if and only if $\widehat\theta=\theta$.

When $Y=X$, given the secret setup, each unused sample constitutes an independent zero-encryption. According to the first property of Proposition \ref{prop:security-zero-transformation}, the response constitutes a fresh encryption of $\boldsymbol m$. This conclusion holds for adaptive requests as well: given the fixed setup, the unused randomness is independent of all previous requests and responses. Consequently, the resulting interaction follows the distribution of the real encryption experiment.

When $Y=U$, each unused sample remains uniformly distributed and independent of the prior interaction and auxiliary information. According to the second property of Proposition \ref{prop:security-zero-transformation}, the response to any requested message follows a uniform distribution. In particular, the response in the challenge phase follows the same distribution regardless of the value of $\theta$. An inductive argument regarding subsequent adaptive queries shows that the complete view is independent of $\theta$; thus, the success probability is $1/2$. The public evaluation process is based on this view and performs computations using established public algorithms; it does not generate responses that depend on secret information, and is therefore also included in the aforementioned simulation process.

In summary, when $Y=X$, $\mathcal D$ accepts (outputs $1$) with the success probability of a real adversary; when $Y=U$, it accepts with probability $1/2$. Taking the absolute difference yields
\[
\alpha_{\mathcal S,L}(\mathcal D)
=\left|\Pr[\widehat\theta=\theta]-\frac12\right|
=\operatorname{Adv}^{\mathrm{CPA}}_{\Pi,\mathcal A,L},
\]
which proves \eqref{eq:security-encryption-to-samples}. If every efficient sample distinguisher has a negligible advantage, then every efficient encryption adversary also has a negligible advantage.
\end{proof}

Next, we need to show, based on Assumption \ref{ass:security-iid-dscp}, that $(\mathsf{aux},X)$ and $(\mathsf{aux},U)$ are computationally indistinguishable. The logic is as follows: if there existed an efficient algorithm capable of distinguishing these sample distributions with non-negligible advantage, we could use it to distinguish "good" coordinate labels from "bad" ones with non-negligible advantage, thereby contradicting the aforementioned assumption. The lemma below presents this algorithmic reduction and quantifies the loss in advantage. For the independent-error experiment defined here, this lemma plays a role analogous to the sample pseudorandomness result found in Theorem 2 of~\cite{AAPS11}.

\begin{lemma}[From sample distinguishing to coordinate distinguishing]
\label{lem:security-coordinate-reduction}
For every efficient $\mathcal D$, there is an efficient IID-DSCP distinguisher $\mathcal B$ such that
\begin{equation}
    \alpha_{\mathcal S,L}(\mathcal D)
       =2T\,\delta_{\mathcal S,L}(\mathcal B).
    \label{eq:security-coordinate-reduction}
\end{equation}
The construction invokes $\mathcal D$ once and uses polynomial additional time.
\end{lemma}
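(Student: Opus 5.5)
The plan is a hybrid argument over coordinates, in the spirit of~\cite[Thm.~2]{AAPS11}. It exploits one structural fact: conditional on the setup, the restriction of $X$ to $B$ is already uniform. Each $\boldsymbol e_j\leftarrow U(E_B)$ has independent uniform entries on $B$, independent of $\boldsymbol z_j$, so $(\boldsymbol x_j)_B$ is uniform and independent of $(\boldsymbol x_j)_I$ and of $\mathsf{aux}$. Hence $X$ and $U$ differ only on the $T$ good coordinates. Re-randomizing a bad coordinate of all $L$ samples leaves the distribution of $(\mathsf{aux},X)$ unchanged, while re-randomizing a good coordinate is a genuine hybrid step. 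For $R\subseteq[n]$, write $X^{R}$ for $X$ with every coordinate in $R$ of every sample replaced by fresh uniform values. The first step is to verify that $X^{R}$ depends in distribution only on $R\cap I$ and that $X^{[n]}\sim U$ jointly with $\mathsf{aux}$.

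Next I would define the distinguisher $\mathcal B$ on input $(\mathsf{aux},i,X)$. It samples a uniformly random ordering of $[n]$ and a uniform cut point, and forms a prefix set $P\subseteq[n]\setminus\{i\}$ that it can build without knowing $I$. It flips a fair coin $c$. If $c=1$ it runs $\mathcal D$ once on $(\mathsf{aux},X^{P\cup\{i\}})$, and if $c=0$ on $(\mathsf{aux},X^{P})$. It outputs $\mathcal D$'s bit or its complement according to $c$. The single call to $\mathcal D$ and the coin account for the factor $2$ in~\eqref{eq:security-coordinate-reduction}. The randomization costs $O(nL)$ field samples, which is polynomial.

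The analysis is a telescoping computation. In the bad experiment, $i\in B$, so $X^{P\cup\{i\}}$ and $X^{P}$ are identically distributed and the coin-symmetrized acceptance probability is exactly $1/2$. In the good experiment, $i$ is uniform in $I$. Because the ordering is random and the sampler's distribution is invariant under the relevant relabeling, the pair $(P,\,P\cup\{i\})$ should then be a uniformly chosen consecutive pair in a chain of hybrids that adds the $T$ good coordinates one at a time, with bad coordinates contributing nothing. The expected difference $\Pr[\mathcal D(X^{P\cup\{i\}})=1]-\Pr[\mathcal D(X^{P})=1]$ therefore telescopes to $\bigl(\Pr[\mathcal D(\mathsf{aux},U)=1]-\Pr[\mathcal D(\mathsf{aux},X)=1]\bigr)/T$. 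Combining this with the coin gives $\delta_{\mathcal S,L}(\mathcal B)=\alpha_{\mathcal S,L}(\mathcal D)/(2T)$, up to the sign convention absorbed into the absolute value.

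The main obstacle is this exchangeability step: $\mathcal B$ must choose $P$ without knowing $I$, yet conditioned on $i\in I$ the number $|P\cap I|$ must be uniform on $\{0,\ldots,T-1\}$, independently of which good coordinate $i$ is. I would first try to obtain this by choosing the cut point relative to $i$'s position in a random ordering, weighted so that the number of good coordinates preceding $i$ is uniform. Only the count $|P\cap I|$ matters, by the first step. If exact uniformity cannot be arranged without knowledge of $I$, I would fall back to proving the inequality $\alpha\le 2T\delta$ that Theorem~\ref{thm:security-cpa-reduction} actually needs, rather than exact equality.
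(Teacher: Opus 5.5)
Your distinguisher is essentially the paper's. If $P$ is the set of coordinates preceding $i$ in $\mathcal B$'s uniformly random ordering of $[n]$ (equivalently, a uniform cut $j\in\{0,\ldots,n-1\}$ of a random ordering of $[n]\setminus\{i\}$), then $P$ has exactly the law of the paper's uniform $j$-subset $J\subseteq[n]\setminus\{i\}$, and your coin-and-complement step is the same. The gap is in the analysis, at the step you yourself flag, and the justification you give there does not hold.

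\begin{itemize}
\item Your claim that ``only the count $|P\cap I|$ matters'' does not follow from your first step. That step shows the view depends on the \emph{set} $R\cap I$. The setup distribution $\mathcal S$ is arbitrary and need not be invariant under any relabeling of coordinates; a fixed evaluation ordering or a non-uniform law for $I$ breaks such symmetry. So two good subsets of the same size can produce different views.
\item Weighting the cut point by the number of good coordinates before $i$ cannot be implemented, since $\mathcal B$ does not know $I$.
\end{itemize}

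As written, the central claim is left unproved: that the good-case expected gap equals $\bigl(\Pr[\mathcal D(\mathsf{aux},U)=1]-\Pr[\mathcal D(\mathsf{aux},X)=1]\bigr)/T$. Your fallback to an inequality would also not give the stated equality.

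The missing fact is true, but it comes from $\mathcal B$'s own coins, not from $\mathcal S$, and no weighting is needed. Fix a setup and $i\in I$. In a uniform ordering, the rank of $i$ among the $T$ good coordinates is uniform on $\{1,\ldots,T\}$. Given rank $t+1$, the good coordinates preceding $i$ form a uniform $t$-subset of $I\setminus\{i\}$. With $i$ uniform on $I$, the sets $P\cap I$ and $(P\cup\{i\})\cap I$ are therefore uniform $t$- and $(t+1)$-subsets of $I$; this is the paper's counting with $n$ replaced by $T$.

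Now define per-setup hybrids $G_t$ by re-randomizing a uniform $t$-subset of $I$. Then $G_0$ is the real view, and $G_T$ is the uniform view because bad coordinates are already uniform. The good-case gap telescopes to $(G_T-G_0)/T$ for each setup. Averaging over $\mathcal S$, which is the setup marginal of the good experiment, gives the exact identity.

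The paper avoids this exchangeability argument entirely. It telescopes over all $n$ coordinates with $i$ uniform in $[n]$, giving average gap $a/n$, where $a$ is the real-minus-uniform acceptance gap. It then splits by total probability. Since $\Pr[i\in I]=T/n$ for every setup, the conditional laws are exactly the good and bad IID-DSCP experiments. The bad-case gap is $0$, so $a=Tg$ follows directly, with $g$ the good-case gap.
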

\begin{proof}
For $J\subseteq[n]$, let $R_J(X)$ replace coordinate $i$ in every sample by a new independent uniform field element, for each $i\in J$. All replacement randomness is independent of the original experiment. Let $H_j$ be the distribution obtained by replacing a uniform $j$-element set of coordinates, and put $p_j=\Pr[\mathcal D(\mathsf{aux},H_j)=1]$. Thus $H_0$ is the real tuple, $H_n$ is the uniform tuple independent of $\mathsf{aux}$, and the signed endpoint gap is $a=p_0-p_n$.

Independently choose $j\leftarrow U(\{0,\ldots,n-1\})$ and $i\leftarrow U([n])$, then choose a uniform $j$-element set $J\subseteq[n]\setminus\{i\}$. For fixed $j$, $J$ and $J\cup\{i\}$ have the uniform laws required for $H_j$ and $H_{j+1}$. Indeed, for fixed sets $J_0,S$ of sizes $j,j+1$,
\[
    \Pr[J=J_0\mid j]=\frac{n-j}{n\binom{n-1}{j}}=\frac1{\binom nj},
    \qquad
    \Pr[J\cup\{i\}=S\mid j]=\frac{j+1}{n\binom{n-1}{j}}=\frac1{\binom n{j+1}}.
\]
The average signed acceptance gap before and after replacing the marked coordinate is therefore
\begin{equation}
    \frac1n\sum_{j=0}^{n-1}(p_j-p_{j+1})=\frac an.
    \label{eq:security-hybrid-gap}
\end{equation}

Write $u_{\mathrm{good}},v_{\mathrm{good}}$ for the acceptance probabilities before and after that replacement conditional on $i\in I$, and define $u_{\mathrm{bad}},v_{\mathrm{bad}}$ similarly. Set $g=u_{\mathrm{good}}-v_{\mathrm{good}}$ and $d_*=u_{\mathrm{bad}}-v_{\mathrm{bad}}$. These quantities are signed changes in acceptance probability, not success probabilities. A bad coordinate already consists of $L$ independent uniform values, jointly independent of the other coordinates conditional on the full setup: condition further on all clean masks and on the errors at other coordinates, and the independent uniform errors at the marked coordinate still make its values uniform. Replacing that coordinate preserves the joint distribution, including $\mathsf{aux}$ and $i$, so $d_*=0$.

For every setup, a uniform index belongs to $I$ with probability $T/n$. Because this probability is constant across setups, conditioning on a good or bad marked index does not change the setup distribution. The conditional views are exactly those in Definition~\ref{def:security-iid-dscp}, followed by the same replacement procedure. By total probability,~\eqref{eq:security-hybrid-gap} becomes
\[
    \frac an=\frac Tn g+\frac{n-T}{n}d_*=\frac Tn g,
    \qquad\text{hence }a=Tg.
\]

On input $(\mathsf{aux},i,X)$, the distinguisher $\mathcal B$ samples $j$ and $J$ as above and flips a fair coin. In one branch it returns $\mathcal D(\mathsf{aux},R_J(X))$; in the other it returns $1-\mathcal D(\mathsf{aux},R_{J\cup\{i\}}(X))$. Its good and bad acceptance probabilities are respectively
\[
    \frac12u_{\mathrm{good}}+\frac12(1-v_{\mathrm{good}})=\frac12+\frac g2,
    \qquad
    \frac12u_{\mathrm{bad}}+\frac12(1-v_{\mathrm{bad}})=\frac12.
\]
Thus $\delta_{\mathcal S,L}(\mathcal B)=|g|/2$, whereas $\alpha_{\mathcal S,L}(\mathcal D)=|a|=T|g|$, proving the identity. The algorithm need not know the sign or value of $g$, and the fair-coin construction makes only one call to $\mathcal D$.
\end{proof}

Under Assumption~\ref{ass:security-iid-dscp}, Lemma~\ref{lem:security-coordinate-reduction} shows that $(\mathsf{aux},X)$ and $(\mathsf{aux},U)$ are computationally indistinguishable. Indeed, if there existed an efficient sample distinguisher $\mathcal D$ with non-negligible advantage $\varepsilon(\lambda)$, the lemma would imply the existence of an efficient IID-DSCP distinguisher $\mathcal B$ satisfying
\[
\delta_{\mathcal S,L}(\mathcal B)=\frac{\varepsilon(\lambda)}{2T}.
\]
Since $T$ is polynomially bounded in $\lambda$, this advantage remains non-negligible, contradicting the assumption about the setting's distributions, auxiliary information, and sample budget. Thus, the assumed hardness rules out such sample distinguishers. While the lemma establishes a consequence of this assumption, it does not prove that IID-DSCP is hard for specific code families or parameter choices.

The precise preservation of "bad coordinates" stems from the independent uniform errors we employ. Consequently, the proof requires neither the punctured-code assumption nor the field-size correction involved in Theorem 2 of~\cite{AAPS11}.

We now combine the two reductions to prove the security theorem stated in Section~\ref{subsec:results-guarantees}.

\begin{proof}[Proof of Theorem~\ref{thm:security-cpa-reduction}]
Given $\mathcal A$, Lemma~\ref{lem:security-encryption-to-samples} constructs a sample distinguisher $\mathcal D$ with $\operatorname{Adv}^{\mathrm{CPA}}_{\Pi,\mathcal A,L}=\alpha_{\mathcal S,L}(\mathcal D)$. Lemma~\ref{lem:security-coordinate-reduction} then constructs an IID-DSCP distinguisher $\mathcal B$ with $\alpha_{\mathcal S,L}(\mathcal D)=2T\delta_{\mathcal S,L}(\mathcal B)$. Combining the two reductions proves~\eqref{eq:security-cpa-bound}. Under Assumption~\ref{ass:security-iid-dscp}, this is negligible because $T$ is polynomially bounded. The simulation includes public evaluations, and an adversary receiving only $\mathsf{pp}$ is covered by withholding $G_1$ when running it in the stronger experiment.
\end{proof}

\subsubsection{Proof of the product-rank restriction}
\label{subsec:proof-product-rank}

Our confidentiality objective is that an observer cannot efficiently distinguish encryptions of two chosen messages within the fresh-sample budget. An attack need not recover the secret key or the entire plaintext to violate this objective. Theorem~\ref{thm:security-cpa-reduction} provides a conditional guarantee: the chosen code family, setup distribution, and parameters must satisfy Assumption~\ref{ass:security-iid-dscp}. We therefore need to examine whether the algebraic structure of the ciphertexts permits an efficient test that contradicts this assumption. The intermediate question is whether the joint distribution of zero encryptions $X$ can be distinguished from the uniform tuple $U$, as defined in~\eqref{eq:security-sample-advantage}.

The secrecy of the mask space and that of the error positions have different roles. Knowing a basis of $K$ alone does not automatically give a decryption algorithm, because a ciphertext also contains an error at unknown positions. However, knowledge of the combined zero-message space $\mathcal Z=K+E_B$ would suffice to distinguish an encryption of zero from one of any nonzero message: the former belongs to $\mathcal Z$, whereas the latter does not, by the recovery condition. With a basis of $\mathcal Z$, this is a linear-algebra membership test, even if the observer cannot identify $K$ and $E_B$ separately. Thus hiding the error positions alone does not establish confidentiality.

Proposition~\ref{prop:security-product-rank} requires no recovery of these secret spaces or positions. Starting with zero encryptions, every supported nonconstant product still represents zero and lies in $\mathcal Z_\Delta=K_\Delta+E_B$. The errors are included in this space. Correctness requires $\mathcal C_1\cap\mathcal Z_\Delta=\{\boldsymbol0\}$, so its dimension is at most $n-k$. Any $n-k+1$ such vectors must therefore be linearly dependent, and an observer can detect this dependence by computing their rank without knowing a basis of $\mathcal Z_\Delta$.

The observer applies the same products to either real samples or uniform inputs and compares the resulting rank behavior. Although both cases start with only $L$ samples, public multiplication can supply many derived vectors. These vectors are functions of the inputs, rather than independent fresh samples; the proof accounts for this dependence when bounding the probability of rank deficiency under uniform inputs. Dimensions of spaces generated by coordinatewise products have also been used in attacks on code-based homomorphic encryption~\cite{GOT12,LageKurmann2026}.

The purpose of this analysis is to identify choices of $L$, $k$, $\Delta$, and $q$ for which our sufficient hardness assumption fails. A nonnegligible real-versus-uniform distinguishing advantage yields a nonnegligible IID-DSCP advantage through Lemma~\ref{lem:security-coordinate-reduction}, so Theorem~\ref{thm:security-cpa-reduction} cannot certify those choices. This particular test does not by itself distinguish encryptions of two chosen messages. Conversely, parameters for which its bound gives no useful test still require a separate assessment of the hardness assumption.

\begin{proof}[Proof of Proposition~\ref{prop:security-product-rank}]
Let $\mathcal Z_\Delta=K_\Delta+E_B$. The recovery condition gives $\mathcal C_1\cap\mathcal Z_\Delta=\{\boldsymbol0\}$: if $\boldsymbol c=\boldsymbol z+\boldsymbol e$ with $\boldsymbol c\in\mathcal C_1$ and $\boldsymbol z\in K_\Delta$, then $\boldsymbol e=\boldsymbol c-\boldsymbol z\in\mathcal C^{(\Delta)}\cap E_B\subseteq K_\Delta$, and hence $\boldsymbol c\in\mathcal C_1\cap K_\Delta=\{\boldsymbol0\}$. Therefore $\dim\mathcal Z_\Delta\leq n-k=M-1$.

There are $\binom{L+\Delta}{\Delta}-1$ distinct nonconstant formal monomials of total degree at most $\Delta$ in $L$ variables. Under~\eqref{eq:security-monomial-count}, choose any $M$ of them. Apply each monomial coordinatewise to the $L$ sample vectors and place the resulting vectors in an $n\times M$ matrix $V$. For real zero encryptions, correctness ensures that every column represents the zero message at a level at most $\Delta$, so every column belongs to $\mathcal Z_\Delta$. Thus $\operatorname{rank}V<M$. The test outputs $1$ exactly when this rank inequality holds. It uses only $M\leq n$ monomials, each of degree at most $\Delta$, and therefore runs in polynomial time.

For a uniform input tuple, consider the determinant $P$ of the first $M$ rows of $V$. It is a polynomial in the $ML$ independent uniform entries of these rows, of total degree at most $M\Delta$. It is not the zero formal polynomial: different terms of the determinant expansion assign distinct chosen monomials to at least one row, so their exponent patterns in the row variables differ and cannot cancel.

By the Schwartz--Zippel bound~\cite{Schwartz1980,Zippel1979}, a nonzero polynomial of total degree $d$ in independent uniform field variables vanishes with probability at most $d/q$. To see this, induct on the number of variables and write the polynomial as a degree-$t$ polynomial in its last variable. Its nonzero leading coefficient has total degree at most $d-t$ in the other variables, so it vanishes with probability at most $(d-t)/q$ by induction. Otherwise, the resulting nonzero univariate polynomial has at most $t$ roots, adding at most $t/q$ to the probability. The one-variable case is the same root bound. Applying this fact to $P$ gives $\Pr[P=0]\leq M\Delta/q$. Since $\operatorname{rank}V<M$ implies $P=0$, the real-versus-uniform acceptance gap is at least~\eqref{eq:security-product-rank-bound}. Lemma~\ref{lem:security-coordinate-reduction} then gives a coordinate-label advantage of at least this gap divided by $2T$, proving the last claim.
\end{proof}

\paragraph{Example and interpretation of the restriction.}
When $\Delta=2$, an observer can use each zero encryption itself, square it coordinatewise, or multiply it coordinatewise by another zero encryption. From $L$ fresh samples, this gives
\[
    \underbrace{L}_{\text{original vectors}}
    +\underbrace{L}_{\text{squares}}
    +\underbrace{\binom L2}_{\text{products of distinct inputs}}
    =\frac{L(L+3)}2
\]
columns, counted by their formal monomials. These columns need not be distinct or linearly independent. On real zero encryptions, all of them lie in $\mathcal Z_2=K_2+E_B$, the space of valid level-two zero-message vectors. Correctness prevents this space from containing a nonzero canonical message representative, so $\mathcal C_1\cap\mathcal Z_2=\{\boldsymbol0\}$ and $\dim\mathcal Z_2\leq n-k$. Any $n-k+1$ columns in this space must therefore be linearly dependent.

For a numerical illustration, take $n=10$, $k=2$, $\Delta=2$, $L=3$, and $q=101$. These values illustrate the restriction and are not proposed secure parameters. The observer receives three vectors $X=[\boldsymbol x_1\ \boldsymbol x_2\ \boldsymbol x_3]$ and forms the $10\times9$ matrix
\[
\begin{aligned}
    V&=[\,X\quad X_{\mathrm{sq}}\quad X_{\mathrm{mix}}\,],\\
    X_{\mathrm{sq}}&=[\,\boldsymbol x_1^{\odot2}\quad\boldsymbol x_2^{\odot2}\quad\boldsymbol x_3^{\odot2}\,],\\
    X_{\mathrm{mix}}&=[\,\boldsymbol x_1\odot\boldsymbol x_2\quad\boldsymbol x_1\odot\boldsymbol x_3\quad\boldsymbol x_2\odot\boldsymbol x_3\,],
\end{aligned}
\]
where $\boldsymbol x_i^{\odot2}=\boldsymbol x_i\odot\boldsymbol x_i$. For any valid setup with these parameters, real zero encryptions place all nine columns in a space of dimension at most $10-2=8$, so $\operatorname{rank}V<9$ always. If the three inputs are instead independent uniform vectors, Proposition~\ref{prop:security-product-rank} bounds the probability of the same rank deficiency by $M\Delta/q=18/101$.

The observer therefore outputs $1$ (guessing real zero encryptions) when $\operatorname{rank}V<9$, and outputs $0$ (guessing uniform inputs) otherwise. Its acceptance probabilities satisfy
\[
\begin{aligned}
    \Pr[\mathcal D(X)=1]&=1,\\
    \Pr[\mathcal D(U)=1]&\leq\frac{18}{101}.
\end{aligned}
\]
Here the public parameters are implicit in the test's input; it does not use $G_1$. The difference of these probabilities gives a sample-distinguishing advantage of at least $83/101\approx0.822$. This quantity measures distinguishing real samples from uniform samples, not distinguishing encryptions of two chosen messages.

Public multiplication is what makes this test possible with only three fresh samples: it supplies nine columns to compare with the dimension bound of eight. A test using only the original sample columns would require nine fresh samples to exceed that same bound. This comparison applies to this particular dimension test, rather than to all possible attacks. The derived columns are functions of the three inputs, not independent fresh encryptions; the determinant argument in the proof accounts for their dependence.

More generally, whenever the bound in Proposition~\ref{prop:security-product-rank} is nonnegligible along a parameter sequence, Lemma~\ref{lem:security-coordinate-reduction} converts the rank test into a nonnegligible IID-DSCP distinguisher. The assumption then fails for that sequence, so Theorem~\ref{thm:security-cpa-reduction} cannot certify its encryption security. The theorem itself remains valid: it is an implication whose premise is not satisfied. The reduction from an encryption attack to a sample test does not establish the converse, so this rank test alone is not an IND-CPA attack. If too few monomials are available, or the lower bound is zero because the field is small, this proposition is inconclusive; it does not establish security. Thus $L$, $k$, $\Delta$, and $q$ must be assessed together, including the information in publicly derived values.

\paragraph{Known and hidden code information.}
Section 5 in~\cite{AAPS11} distinguishes attacks with knowledge of the underlying code from attacks when that code is secret. In our analysis, the relevant distinction is between the actual experiment, which supplies $\mathsf{pp}$ and the requested ciphertexts, and a stronger experiment that also supplies $G_1$. Knowing $G_1$ reveals the encoding $\boldsymbol m\mapsto G_1\boldsymbol m$ and the canonical message space $\mathcal C_1$. It does not explicitly reveal the mask space $K$, the good positions $I$, or the full code family. This is partial disclosure of code information, and any additional structure inferable from $G_1$ must still be included in the security analysis.

The extra encoder information has a specific role in the proof: it lets the simulator turn an unused zero sample $\boldsymbol x$ into the requested ciphertext $G_1\boldsymbol m+\boldsymbol x$. This is why Assumption~\ref{ass:security-iid-dscp} includes $G_1$ in the auxiliary information. If the scheme is secure even with this additional information, it is also secure when $G_1$ is hidden, because an adversary can always ignore information it is given. Conversely, an attack that requires $G_1$ need not work when the encoder is hidden, and no numerical security gain from hiding it is established here.

The rank test of Proposition~\ref{prop:security-product-rank} is particularly relevant to this distinction: it uses only public parameters and the sample vectors, so hiding $G_1$ does not prevent that test. Its conclusion remains about the real-versus-uniform sample problem. To obtain a concrete encryption-security claim from Theorem~\ref{thm:security-cpa-reduction}, we still need a specified code family and setup distribution for which the IID-DSCP assumption, with its stated auxiliary information and sample budget, is justified. Recovery and homomorphic correctness alone do not establish that hardness.

\subsection{Proof of Theorem~\ref{thm:rm-tower}: the RM Construction}
\label{sec:rm-tower}

We construct the binary family promised by Theorem~\ref{thm:rm-tower}, extending the RM coset construction in~\cite[Sec.~VI-A]{YGJHomECC}. Message coefficients occupy one homogeneous degree, while masks occupy higher degrees. We first specify the code spaces and then verify their dimensions, distances, and message operations.

\subsubsection{Boolean Monomials and the Compatible Tower}
\label{subsec:rm-construction}

We use the polynomial description of binary Reed--Muller codes~\cite[Sec.~II]{AbbeShpilkaYe2021}. Fix an ordering of the $2^m$ points of $\mathbb F_2^m$, and let $\operatorname{ev}$ evaluate a polynomial at these points. A polynomial is multilinear if each variable has exponent at most one in every monomial. On Boolean evaluation points, every positive power of a coordinate equals that coordinate. Thus replacing each positive exponent by one, and collecting equal monomials over $\mathbb F_2$, preserves the evaluation vector. The resulting multilinear polynomial is unique, as proved below. This is equality of functions on $\mathbb F_2^m$; the original formal polynomial may contain higher powers. For $S\subseteq[m]$, write $x^S=\prod_{i\in S}x_i$, and define
\begin{equation}
    L_j=\operatorname{span}_{\mathbb F_2}
       \{\operatorname{ev}(x^S):S\subseteq[m],\ |S|=j\},
    \qquad 0\leq j\leq m.
    \label{eq:rm-homogeneous-spaces}
\end{equation}
Thus the binary Reed--Muller code of order $s$ is $\operatorname{RM}_2(s,m)=\bigoplus_{j=0}^{s}L_j$. The spaces below are subcodes of these Reed--Muller codes. The key multiplication rule is
\begin{equation}
    \operatorname{ev}(x^S)\odot\operatorname{ev}(x^U)
       =\operatorname{ev}(x^{S\cup U}).
    \label{eq:rm-support-union}
\end{equation}
Unlike the univariate construction considered next, this rule separates message and mask terms through unions of supports; it does not require an arithmetic-progression-free exponent set.

Use the parameters $m,r,\Delta,n,k$ of Theorem~\ref{thm:rm-tower}. For $1\leq\ell\leq\Delta$, define
\begin{equation}
    \mathcal C_1=L_r,\qquad
    \mathcal C^{(\ell)}=\bigoplus_{j=r}^{2\ell r}L_j,\qquad
    K_\ell=\bigoplus_{j=r+1}^{2\ell r}L_j.
    \label{eq:rm-tower-spaces}
\end{equation}
Equip each $\mathcal C^{(\ell)}$ with the map $\pi_\ell:\mathcal C^{(\ell)}\to\mathbb F_2^k$ that sends a codeword to the degree-$r$ coefficients of its unique multilinear polynomial representative, using the same fixed ordering of the $r$-element subsets at every level. Use the corresponding evaluated monomials as the message basis, and set $\mathcal C_2=\mathcal C^{(1)}$.

\begin{proof}[Proof of Theorem~\ref{thm:rm-tower}]
First, we justify the bases and uniqueness of the polynomial representation. For $U\subseteq[m]$, let $\boldsymbol1_U\in\mathbb F_2^m$ have ones exactly at the positions in $U$. The matrix whose row $U$ and column $S$ entry is $x^S(\boldsymbol1_U)$ has entry one exactly when $S\subseteq U$. For this invertibility argument, reorder its rows and columns by the same subset ordering, sorted by increasing cardinality. The matrix becomes lower triangular with diagonal entries one, so it is invertible. Reordering does not affect linear independence. Hence the evaluated squarefree monomials form a basis of $\mathbb F_2^{2^m}$, and every evaluation vector has a unique multilinear representative. In particular, the monomial evaluations defining the message and mask spaces are linearly independent, the sums in~\eqref{eq:rm-tower-spaces} are direct, and $\dim L_j=\binom mj$. This proves the dimension formulas and nesting of the code spaces.

We now specify the bases and construct the message maps explicitly. List the $r$-element subsets in the fixed message order as $S_1,\ldots,S_k$. The message vector $\boldsymbol m=(m_{S_1},\ldots,m_{S_k})^{\mathsf T}$ uses the standard basis of $\mathbb F_2^k$. Its corresponding ordered basis in $\mathcal C_1$ is $(\operatorname{ev}(x^{S_1}),\ldots,\operatorname{ev}(x^{S_k}))$, so the generator matrix and encoder are
\begin{equation}
\begin{aligned}
    G_1&=[\,\operatorname{ev}(x^{S_1})\ \cdots\ \operatorname{ev}(x^{S_k})\,],\\
    \mathsf{Encode}(\boldsymbol m)&=G_1\boldsymbol m
       =\operatorname{ev}\left(\sum_{j=1}^k m_{S_j}x^{S_j}\right).
\end{aligned}
    \label{eq:rm-ordered-encoder}
\end{equation}
At level $\ell$, use the evaluated monomials $\operatorname{ev}(x^U)$ with $r<|U|\leq2\ell r$ as a basis of $K_\ell$. Order them first by degree and then lexicographically within each degree. These mask bases are prefixes of the terminal basis, as required by the setup algorithm. Thus the concatenated message and mask bases specify all clean-codeword coordinates in the internal direct sum.

For $S\subseteq[m]$, let $j(S)\in[n]$ be the index of $\boldsymbol1_S$ in the fixed evaluation ordering. The set $S$ labels a monomial, whereas $j(S)$ labels a coordinate of an evaluation vector. Define the message-extraction map explicitly by
\begin{equation}
    \pi_\ell:\mathcal C^{(\ell)}\longrightarrow\mathbb F_2^k,
    \qquad
    \pi_\ell(\boldsymbol c)
       =(c_{j(S_1)},\ldots,c_{j(S_k)})^{\mathsf T}.
    \label{eq:rm-coordinate-extraction}
\end{equation}
This selects the same $k$ evaluation positions at every level; the domain of the map changes with $\ell$. For the unique multilinear representative $f$ of a codeword in $\mathcal C^{(\ell)}$, we will verify
\begin{equation}
    \bigl(\pi_\ell(\operatorname{ev}(f))\bigr)_S
       =f(\boldsymbol1_S)=[x^S]f,
    \qquad |S|=r,
    \label{eq:rm-message-map}
\end{equation}
where $[x^S]f$ is the coefficient of $x^S$. This coefficient identity depends on the chosen monomial spaces: $f$ has no terms of degree less than $r$, and every mask term has degree greater than $r$. With $\mathcal C_2=\mathcal C^{(1)}$, these spaces form the tower
\begin{equation}
    \mathcal C_1\subseteq\mathcal C_2=\mathcal C^{(1)}
       \subseteq\mathcal C^{(2)}\subseteq\cdots\subseteq\mathcal C^{(\Delta)},
    \qquad \mathcal C^{(\ell)}=\mathcal C_1\oplus K_\ell.
    \label{eq:rm-code-tower}
\end{equation}

We now verify why the explicitly defined map~\eqref{eq:rm-coordinate-extraction} returns coefficients. A codeword $\boldsymbol c\in\mathcal C^{(\ell)}$ has the unique multilinear representative
\[
    f=\sum_{r\leq|U|\leq2\ell r}a_Ux^U.
\]
At the Boolean point $\boldsymbol1_S$, a monomial $x^U$ is one exactly when $U\subseteq S$. If $|S|=r$, every nonzero contribution must therefore have $U\subseteq S$ and $|U|\geq r$, forcing $U=S$. Consequently,
\begin{equation}
    c_{j(S)}=f(\boldsymbol1_S)
       =\sum_{\substack{U\subseteq S\\r\leq|U|\leq2\ell r}}a_U
       =a_S=[x^S]f.
    \label{eq:rm-coefficient-recovery}
\end{equation}
This proves~\eqref{eq:rm-message-map}; it is not an identification of arbitrary evaluation values with polynomial coefficients. Let $J_r=(j(S_1),\ldots,j(S_k))$ be an ordered list, and let the subscript $J_r$ select rows in that listed order. The same calculation gives
\[
    [\,G_1\;G_{K,\ell}\,]_{J_r}=[\,I_k\;0\,],
    \qquad
    \pi_\ell(G_1\boldsymbol m+G_{K,\ell}\boldsymbol r)=\boldsymbol m.
\]
Here $G_{K,\ell}$ has the ordered mask basis as its columns, and $\boldsymbol r\in\mathbb F_2^{h_\ell}$ contains the mask coefficients. Thus $\pi_\ell$ is linear and surjective, its kernel is exactly $K_\ell$, and its restriction to $\mathcal C_1$ is the inverse of the chosen encoder. Since the same coordinate selection is used at every level, the message maps agree on the nested spaces. In particular, the clean encoding is systematic in the message symbols after placing the coordinates in $J_r$ first.

Next, take levels $a,b$ with $a+b\leq\Delta$, and let $f=\sum_{U\subseteq[m]}f_Ux^U$ and $g=\sum_{V\subseteq[m]}g_Vx^V$ be the multilinear representatives of codewords in $\mathcal C^{(a)}$ and $\mathcal C^{(b)}$. Define the polynomial
\begin{equation}
    p=\sum_{U,V\subseteq[m]}f_Ug_Vx^{U\cup V}.
    \label{eq:rm-reduced-product}
\end{equation}
Here $p$ is the multilinear representative of the ordinary product $fg$: a variable occurring in both factors appears squared in $fg$, but is replaced by its first power in $p$. Equal monomials are combined using addition in $\mathbb F_2$. By~\eqref{eq:rm-support-union},
\[
    \operatorname{ev}(p)=\operatorname{ev}(fg)
       =\operatorname{ev}(f)\odot\operatorname{ev}(g).
\]
For a nonzero coefficient $f_Ug_V$, the support sizes satisfy $r\leq|U|\leq2ar$ and $r\leq|V|\leq2br$. Therefore $r\leq|U\cup V|\leq2(a+b)r$, so the evaluated product belongs to $\mathcal C^{(a+b)}$. Taking linear spans gives the required product-code inclusion.

To identify its message, fix $S\subseteq[m]$ with $|S|=r$. Equation~\eqref{eq:rm-reduced-product} gives
\[
    [x^S]p=\sum_{U\cup V=S}f_Ug_V=f_Sg_S.
\]
Indeed, any nonzero summand has $|U|,|V|\geq r$, while $U,V\subseteq S$ and $|S|=r$. These conditions force $U=V=S$. Applying the message maps thus yields
\[
    \pi_{a+b}\bigl(\operatorname{ev}(f)\odot\operatorname{ev}(g)\bigr)
       =\pi_a(\operatorname{ev}(f))\odot\pi_b(\operatorname{ev}(g)).
\]
Every monomial in a mask polynomial has degree greater than $r$, so a product involving such a monomial cannot have union size $r$. This establishes the message-product identity for all masked representatives.

For completeness, a nonzero multilinear polynomial of degree at most $s\leq m$ has evaluation weight at least $2^{m-s}$. One proves this by induction on $m$. The case $m=0$ is immediate, and for $s=m$ the bound is simply one. Otherwise write $f=g+x_m h$. If $h=0$, the two slices agree and the induction hypothesis gives $\operatorname{wt}(\operatorname{ev}(f))=2\operatorname{wt}(\operatorname{ev}(g))\geq2^{m-s}$. If $h\ne0$, then $s\geq1$, $\deg h\leq s-1$, and the two slices are $g$ and $g+h$. At every point where $h$ is one, exactly one slice is one. Hence their total weight is at least $\operatorname{wt}(\operatorname{ev}(h))\geq2^{(m-1)-(s-1)}$. This completes the induction.

Applying the bound with $s=2\ell r$ gives $d_\ell\geq2^{m-2\ell r}$. To attain this weight with a nonzero message, choose disjoint sets $S,J\subseteq[m]$ with $|S|=r$ and $|J|=2\ell r-r$, and take
\[
    f=x^S\prod_{j\in J}(1+x_j).
\]
Every monomial in its expansion has degree between $r$ and $2\ell r$, and its degree-$r$ coefficient at $S$ is one. Thus its evaluation belongs to $\mathcal C^{(\ell)}\setminus K_\ell$. It is nonzero exactly when all coordinates in $S$ are one and all those in $J$ are zero, leaving $m-2\ell r$ free coordinates. Its weight is $2^{m-2\ell r}$. This proves both asserted distance equalities.

Finally,~\eqref{eq:rm-recovery-budget} gives $b<d_\Delta\leq d_\ell$. A nonzero codeword cannot be supported inside $B$, so $\mathcal C^{(\ell)}\cap E_B=\{\boldsymbol0\}\subseteq K_\ell$ at every level. Lemma~\ref{lem:prelim-recovery} supplies message recovery, completing the compatibility requirements.
\end{proof}

To illustrate the construction, take $m=5$, $r=1$, and $\Delta=2$, so a clean codeword has length $32$ and carries five message bits. Choose the two clean polynomials
\[
    f=x_1+x_2,\qquad g=x_1+x_3,
\]
with zero masks. Their linear coefficients represent the messages $\boldsymbol m=(1,1,0,0,0)^{\mathsf T}$ and $\boldsymbol m'=(1,0,1,0,0)^{\mathsf T}$. Evaluation turns each polynomial into a vector by recording its value at every point of $\mathbb F_2^5$. For this example, list the five unit points first and the remaining points in lexicographic order, obtaining $\boldsymbol t_1,\ldots,\boldsymbol t_{32}$. Thus
\[
    \operatorname{ev}(f)=\bigl(f(\boldsymbol t_1),\ldots,f(\boldsymbol t_{32})\bigr)^{\mathsf T},
    \qquad
    \operatorname{ev}(g)=\bigl(g(\boldsymbol t_1),\ldots,g(\boldsymbol t_{32})\bigr)^{\mathsf T}.
\]
Since $r=1$, the map $\pi_\ell$ reads the values at the five unit points, which equal the five linear coefficients. Under this evaluation ordering, these are the first five entries, giving
\[
    \pi_1(\operatorname{ev}(f))=(1,1,0,0,0)^{\mathsf T},
    \qquad
    \pi_1(\operatorname{ev}(g))=(1,0,1,0,0)^{\mathsf T}.
\]

Now multiply the two evaluation vectors coordinatewise. At each evaluation point, the resulting entry is the product of the corresponding values of $f$ and $g$, so it equals the value of $fg$. Using $x_1^2=x_1$ on Boolean points gives
\[
\begin{aligned}
    fg&=x_1^2+x_1x_2+x_1x_3+x_2x_3,\\
    p&=x_1+x_1x_2+x_1x_3+x_2x_3,\\
    \operatorname{ev}(f)\odot\operatorname{ev}(g)&=\operatorname{ev}(p).
\end{aligned}
\]
The product is assigned level two. Its only nonzero linear coefficient is that of $x_1$, while the quadratic terms belong to $K_2$. Applying the output message map therefore gives
\[
    \pi_2(\operatorname{ev}(p))=(1,0,0,0,0)^{\mathsf T}
       =\boldsymbol m\odot\boldsymbol m'.
\]
Thus multiplying the clean evaluation vectors preserves the intended product of the two messages, even though the resulting polynomial also contains mask terms.

\subsubsection{Parameters and a Three-Level Example}
\label{subsec:rm-parameters}

For this tower, the three main parameters satisfy
\begin{equation}
    n=2^m,\qquad k=\binom mr,\qquad
    d_\Delta=2^{m-2\Delta r}=\frac{n}{2^{2\Delta r}}.
    \label{eq:rm-tradeoff}
\end{equation}
Fixing $n$ fixes the integer $m$. For a prescribed $\Delta$, choosing $r$ then determines both the packed message dimension and the terminal distance; an arbitrary pair $(k,d_\Delta)$ need not be achievable. For example, with a target distance $d_0\geq2$, any integer $r\geq1$ satisfying $2\Delta r\leq m-\lceil\log_2 d_0\rceil$ gives $d_\Delta\geq d_0$ and $k=\binom mr$. Within this range, increasing $r$ increases the message dimension but reduces the terminal distance. Increasing $\Delta$ at fixed $m,r$ likewise reduces the distance. No choice of an AP-free set is needed in these parameter calculations.

Take $m=8$, $r=1$, and $\Delta=3$. Then one length-$256$ binary ciphertext carries $k=8$ message bits. A fresh clean polynomial has the form
\[
    f(\boldsymbol x)=\sum_{i=1}^{8}m_i x_i
       +\sum_{1\leq i<j\leq8}z_{ij}x_i x_j.
\]
The linear coefficients carry the message and the quadratic coefficients supply the random mask. When multiplying two such polynomials, Boolean reduction using $x_i^2=x_i$ preserves the desired coefficient $m_i m'_i$, while every product with at least two distinct variables remains a mask term. For example, reducing $x_1x_2\cdot x_1$ gives $x_1x_2$, whereas $x_1x_2\cdot x_3x_4=x_1x_2x_3x_4$ is already multilinear. The latter term explains why the next code space includes degrees up to four. A further multiplication by a fresh input requires degrees up to six.

\begin{table}[tbp]
    \centering
    \caption{A binary RM correctness example with $n=256$, $k=8$, and $\Delta=3$. These parameters do not establish encryption security.}
    \label{tab:rm-tower-example}
    \small
    \renewcommand{\arraystretch}{1.2}
    \begin{tabular*}{\textwidth}{@{\extracolsep{\fill}}lcccc@{}}
        \hline
        Code & Monomial degrees & $\rho_\ell$ & $h_\ell$ & Exact distance \\
        \hline
        $\mathcal C^{(1)}=\mathcal C_2$ & $1$--$2$ & $36$ & $28$ & $d_1=64$ \\
        $\mathcal C^{(2)}$ & $1$--$4$ & $162$ & $154$ & $d_2=16$ \\
        $\mathcal C^{(3)}$ & $1$--$6$ & $246$ & $238$ & $d_3=4$ \\
        \hline
    \end{tabular*}
\end{table}

All three levels recover from any $T=253$ good positions, since the remaining three positions are fewer than the terminal distance of four. The message space remains eight-dimensional throughout the tower. Its higher levels provide more room for mask terms created by products, at the cost of smaller distance.

This example also illustrates why algebraic correctness must be separated from confidentiality. At each of the eight unit evaluation points $\boldsymbol e_i$, every mask monomial vanishes and the clean value is exactly $m_i$. If the evaluation ordering is public, an adversary distinguishes a challenge encryption of $\boldsymbol0\in\mathbb F_2^8$ from one of $\boldsymbol1\in\mathbb F_2^8$ by taking the majority of these eight ciphertext entries: at most three are corrupted, so at least five equal the challenge bit. The success probability is one. The same attack applies in the stronger experiment revealing the canonical $G_1$, whose rows identify the evaluation points when $r=1$. Thus these numerical parameters do not satisfy our security assumption. The argument does not by itself establish an attack when the encoding and evaluation ordering are hidden; that case needs its own analysis.

\subsection{Proof of Theorem~\ref{thm:average-free-tower}: the RS Construction}
\label{sec:average-free-tower}

We now construct the Reed--Solomon family of Theorem~\ref{thm:average-free-tower}, extending the univariate coset construction in~\cite[Sec.~VI-B]{YGJHomECC}. Here multiplication adds integer exponents. The average-free condition separates pure products from mixed terms, while a translation separates sums containing different numbers of factors.

\subsubsection{Exponent Separation and the Nested Codes}
\label{subsec:average-free-construction}

In the product of two canonical codewords considered in~\cite{YGJHomECC}, the coefficient of $x^{2a}$ carries the product of the two message coordinates indexed by $a$. A mixed term $x^{a'+a''}$ must not occupy that same exponent. The condition $a'+a''=2a\Rightarrow a'=a''=a$ is precisely the absence of a nontrivial three-term arithmetic progression in an integer exponent set. Products of more factors require the corresponding condition for longer sums, including repeated summands. We use Definition~\ref{def:average-free-exponents}, stated in the preliminaries.

Higher average-freeness is stronger than ordinary 3-AP-freeness: for example, $\{1,2,4\}$ is 3-AP-free but $1+1+4=3\cdot2$. Excluding longer arithmetic progressions alone does not give~\eqref{eq:average-free-condition}. We must also prevent collisions between sums with different numbers of terms. Even the two-element set $\{2,3\}$ has $2+2+2=3+3$. The translation in the construction below separates those different sum lengths. Thus both conditions have a specific role in the construction.

Use the parameters $\Delta,D,N,S,k,q,n,T$ of Theorem~\ref{thm:average-free-tower}, and fix the stated evaluation points $X=(x_1,\ldots,x_n)$ and a good-position set $I$ of size $T$. Set
\[
    A=(D-1)N+S\subseteq\{(D-1)N+1,\ldots,DN\}.
\]
For a positive integer $j$, write $jA=\{a_1+\cdots+a_j:a_i\in A\}$. For $1\leq\ell\leq\Delta$, define
\begin{align}
    \Sigma_\ell&=\bigcup_{j=1}^{2\ell}jA,
    &\mathcal P_\ell&=\operatorname{span}_{\mathbb F_q}\{x^u:u\in\Sigma_\ell\},\label{eq:tower-polynomial-spaces}\\
    \mathcal C^{(\ell)}&=\operatorname{ev}_X(\mathcal P_\ell),
    &\mathcal C_1&=\operatorname{ev}_X\operatorname{span}_{\mathbb F_q}\{x^a:a\in A\}.\label{eq:tower-code-spaces}
\end{align}
Index the message coordinates by a fixed ordering of $A$. If $[x^u]f$ denotes the coefficient of $x^u$ in $f$, define
\begin{equation}
    \bigl(\pi_\ell(\operatorname{ev}_X(f))\bigr)_a
       =\sum_{j=1}^{2\ell}[x^{ja}]f,
    \qquad a\in A,\quad f\in\mathcal P_\ell,
    \label{eq:tower-message-map}
\end{equation}
and set $K_\ell=\ker\pi_\ell$. We will prove that these codes and message maps satisfy Definition~\ref{def:compatible-family}, with fresh code $\mathcal C_2=\mathcal C^{(1)}$ and canonical encoding
\[
    \mathsf{Encode}(\boldsymbol m)
       =\operatorname{ev}_X\left(\sum_{a\in A}m_a x^a\right).
\]
In particular, the tower and its decompositions are
\begin{equation}
    \mathcal C_1\subseteq\mathcal C_2=\mathcal C^{(1)}
       \subseteq\mathcal C^{(2)}\subseteq\cdots\subseteq\mathcal C^{(\Delta)},
    \qquad \mathcal C^{(\ell)}=\mathcal C_1\oplus K_\ell.
    \label{eq:average-free-code-tower}
\end{equation}

Write $r_\ell=|\Sigma_\ell|$ for the dimension claimed for the clean code; the proof will also identify it with the sumset formula in Theorem~\ref{thm:average-free-tower}.

\begin{proof}[Proof of Theorem~\ref{thm:average-free-tower}]
First, translation preserves~\eqref{eq:average-free-condition}. The sumset $jA$ is contained in the integer interval $[j((D-1)N+1),jDN]$. Consecutive such intervals are disjoint for $j<D$, since
\[
    (j+1)((D-1)N+1)-jDN=(D-1-j)N+j+1>0.
\]
Consequently, every exponent in $\bigcup_{j=1}^{D}jA$ determines its sum length $j$. Within $jA$, average-freeness implies that an exponent $ja$ can arise only from $j$ copies of $a$. Every other exponent in $jA$ is a mixed sum, and cannot represent a pure product at that length.

Every polynomial in $\mathcal P_\ell$ has degree at most $2\ell DN\leq D^2N<n$. A nonzero polynomial of this degree cannot vanish at all $n$ evaluation points, so evaluation is injective. Thus its polynomial representative and the coefficient rule~\eqref{eq:tower-message-map} are well defined. The distinct monomials indexed by $\Sigma_\ell$ give dimension $r_\ell$. Since $jA=j(D-1)N+jS$ and the sum-length intervals are disjoint, $r_\ell=|\Sigma_\ell|=\sum_{j=1}^{2\ell}|jS|$, as stated in the theorem. The canonical polynomial $\sum_{a\in A}m_a x^a$ maps to $\boldsymbol m$, proving surjectivity. Subtracting this canonical polynomial from any polynomial with message $\boldsymbol m$ leaves a kernel element. Since the canonical space maps bijectively onto $\mathbb F_q^k$, this gives the internal direct sum and $h_\ell=r_\ell-k$. The spaces are nested, and the coefficient rules agree on smaller spaces because the sum-length intervals are disjoint.

To prove multiplicative compatibility, take $r+s\leq\Delta$. A product of monomials from $\mathcal P_r$ and $\mathcal P_s$ has exponent in $jA$ for some $j\leq2(r+s)$, proving $\mathcal P_r\mathcal P_s\subseteq\mathcal P_{r+s}$. Such a product contributes to the coordinate indexed by $a$ only when all of its underlying summands equal $a$. Indeed, a mixed summand list would contradict average-freeness if its total were a pure exponent. Therefore, for $f\in\mathcal P_r$ and $g\in\mathcal P_s$,
\[
    \bigl(\pi_{r+s}(\operatorname{ev}_X(fg))\bigr)_a
       =\left(\sum_{u=1}^{2r}[x^{ua}]f\right)
        \left(\sum_{v=1}^{2s}[x^{va}]g\right).
\]
Using~\eqref{eq:prelim-evaluation-product}, this is precisely the required coordinatewise message product. The argument applies to every polynomial in the two spaces, including arbitrary kernel components.

Finally, a nonzero polynomial in $\mathcal P_\ell$ has at most $2\ell DN$ roots, which proves the distance bounds. Differences of representatives from distinct message cosets are nonzero clean codewords, so the same lower bound applies to their weights. Since $T>D^2N$, vanishing on all good coordinates forces any polynomial in the largest space to be zero. Hence $\mathcal C^{(\ell)}\cap E_B=\{\boldsymbol0\}\subseteq K_\ell$ at every level. All compatibility and recovery requirements follow.
\end{proof}

The factor $2$ in $D=2\Delta$ accounts for the fresh mask space: a fresh clean polynomial already includes products of two canonical monomials. Indeed,
\begin{equation}
    K_1=\operatorname{ev}_X\operatorname{span}_{\mathbb F_q}
       \left(\{x^{a+a'}:a\ne a',\ a,a'\in A\}
              \cup\{x^{2a}-x^a:a\in A\}\right).
    \label{eq:tower-fresh-kernel}
\end{equation}
This is the kernel form used for multiplication of two canonical codewords in~\cite{YGJHomECC}. Supporting degree-$\Delta$ computations on fresh masked inputs can involve up to $2\Delta$ canonical factors, which explains our stronger sufficient condition. In particular, one public multiplication of two fresh ciphertexts requires $\Delta\geq2$ and uses $D\geq4$ here. The theorem does not claim that this average-free order or the distance bound is optimal. The binary RM tower in Theorem~\ref{thm:rm-tower} provides a different mechanism that requires no average-free sets.

\subsubsection{Parameter Choices and a Three-Level Example}
\label{subsec:tower-parameters}

To express the tradeoff among length, message dimension, and distance, let $d_0\geq2$ be a desired lower bound on the minimum distance at the highest supported level. With $D=2\Delta$ and a fixed field large enough to supply the evaluation points, a sufficient design condition is
\begin{equation}
    k\leq a_D(N),\qquad n\geq D^2N+d_0,
    \qquad n\leq q-1.
    \label{eq:tower-parameter-region}
\end{equation}
Choose a $D$-average-free set of at least $k$ elements in $[N]$ and retain any $k$ of them; taking a subset preserves average-freeness. Theorem~\ref{thm:average-free-tower} then gives $k$ message symbols and terminal distance at least $d_0$. The integer $d_0$ is a guaranteed distance, not a formula for the exact distance of every instance. The clean-code dimension is $r_\ell$, while $k$ is the dimension of its message quotient.

\begin{table}[tbp]
    \centering
    \caption{Sufficient parameter choices for Theorem~\ref{thm:average-free-tower}, with $D=2\Delta$, $d_0\geq2$, and $n\leq q-1$. These are achievable bounds for this construction.}
    \label{tab:tower-parameter-choices}
    \small
    \renewcommand{\arraystretch}{1.25}
    \begin{tabular*}{\textwidth}{@{\extracolsep{\fill}}ll@{}}
        \hline
        Quantities fixed & Sufficient choice of the remaining parameter \\
        \hline
        Length $n$ and target $d_0$ & $N=\lfloor(n-d_0)/D^2\rfloor\geq1$, then $k\leq a_D(N)$ \\
        Length $n$ and message dimension $k$ & Choose $N$ with $a_D(N)\geq k$; obtain $d_\Delta\geq n-D^2N$ \\
        Message dimension $k$ and target $d_0$ & Choose $N$ with $a_D(N)\geq k$, then $n\geq D^2N+d_0$ \\
        \hline
    \end{tabular*}
\end{table}

In Table~\ref{tab:tower-parameter-choices}, $d_\Delta=d(\mathcal C^{(\Delta)})$, and the second row requires $n-D^2N\geq2$ to retain a nonempty error set under our sufficient recovery condition. Increasing $\Delta$ strengthens the required average-free property and increases the evaluation-degree budget in this bound. Thus, at fixed length and target distance, increasing the supported degree may reduce the achievable number of message symbols. This is the precise sense in which the choice of an exponent set controls the third parameter after the other two and the computation degree are fixed.

For an explicit example with three computation levels, take $\Delta=3$, $D=6$, $N=2$, and $S=\{1,2\}$. This two-element set is $D$-average-free: an average of its elements equals either endpoint only when all summands equal that endpoint. The translation gives $A=\{11,12\}$. Choose $q=83$, $n=80$, evaluation points $1,\ldots,80$, and $T=73$ good positions. Since $jA=\{11j,11j+1,\ldots,12j\}$, the theorem gives
\[
    \mathcal C_1\subseteq\mathcal C_2=\mathcal C^{(1)}
       \subseteq\mathcal C^{(2)}\subseteq\mathcal C^{(3)},
    \qquad k=2,
\]
with the parameters in Table~\ref{tab:tower-example}. All levels use the same length and carry the same two message symbols; the code and kernel dimensions grow while the guaranteed distance decreases.

\begin{table}[tbp]
    \centering
    \caption{A correctness example with $n=80$, $k=2$, and three computation levels.}
    \label{tab:tower-example}
    \small
    \renewcommand{\arraystretch}{1.2}
    \begin{tabular*}{\textwidth}{@{\extracolsep{\fill}}lcccc@{}}
        \hline
        Code & Maximum degree & $r_\ell$ & $h_\ell$ & Guaranteed distance \\
        \hline
        $\mathcal C^{(1)}=\mathcal C_2$ & $24$ & $5$ & $3$ & $d_1\geq56$ \\
        $\mathcal C^{(2)}$ & $48$ & $14$ & $12$ & $d_2\geq32$ \\
        $\mathcal C^{(3)}$ & $72$ & $27$ & $25$ & $d_3\geq8$ \\
        \hline
    \end{tabular*}
\end{table}

The condition $T=73>72$ permits recovery at every level while the remaining seven coordinates may contain arbitrary errors. This example illustrates correctness and the parameter counts. With its displayed evaluation points and exponents public, it is not a secure instantiation: an outsider can recover the fresh polynomial of degree at most $24$ from seven errors and then apply the message map.\footnote{Solve the linear equations $Q(x_i)=w_i E(x_i)$ with $E$ monic of degree $7$ and $\deg Q\leq31$. A solution exists by taking $E$ to vanish at the error locations and setting $Q=Ef$. For any solution, $Q-Ef$ has degree at most $31$ and vanishes at the at least $73$ uncorrupted positions, hence is zero. Polynomial division recovers $f=Q/E$ without knowing $I$.} A cryptographic instantiation requires a separate setup distribution and analysis of known-code attacks.

More generally, for the tower in Theorem~\ref{thm:average-free-tower}, revealing $G_1$ also reveals $K_1$: if $\boldsymbol g_a$ is its column indexed by $a$, then~\eqref{eq:tower-fresh-kernel} gives $K_1=\operatorname{span}\{\boldsymbol g_a\odot\boldsymbol g_{a'}:a\ne a'\}+\operatorname{span}\{\boldsymbol g_a\odot\boldsymbol g_a-\boldsymbol g_a\}$. Higher code spaces can similarly be generated from these columns. Any use of Assumption~\ref{ass:security-iid-dscp} for this family must therefore account for that exposed structure, as well as the sample and product restrictions in Section~\ref{sec:security}. The average-free condition establishes product compatibility; it does not establish the hardness assumption.

\section{Conclusion}
\label{sec:conclusion}

We propose a coset-based encryption framework in which a single ciphertext carries a vector of message symbols and supports public coordinate-wise addition and multiplication operations up to a predetermined formal degree. Compatible code spaces preserve message operations, while "good positions"—acting as secret information—enable the recovery of the resulting message cosets. Constructions based on binary Reed-Muller codes and average-free Reed-Solomon codes establish these algebraic properties and demonstrate the trade-offs between message dimension, the number of supported operations, and recovery distance. For a given field and ciphertext vector length, packing techniques reduce the number of vectors required to transmit and process a record.

Security analysis indicates that, under the established IID-DSCP assumption, the scheme satisfies conditional bounded chosen-plaintext security. However, rank tests reveal that for certain parameter choices, publicly computed product results can invalidate this assumption, even without leaking the secret code space or error location information. Consequently, the "fresh sample budget" and "supported computational operations" must be evaluated jointly. While specific code family constructions establish the scheme's correctness, concrete secure instantiations have not yet been developed. Future research should evaluate the security of these code families and sampling distributions against structural and sample-based attacks, refine the constructions where necessary, and determine whether the advantages of packing techniques persist when compared at equivalent security levels.

\bibliographystyle{IEEEtran}
\bibliography{references}

\end{document}